\documentclass{article}

\usepackage{amsmath,amssymb,amsthm}
\usepackage{fullpage}
\usepackage{xcolor,xspace}
\usepackage{graphicx}
\usepackage{caption}
\usepackage{booktabs}
\usepackage{microtype}
\usepackage{enumitem}
\usepackage{thmtools}
\usepackage{thm-restate}
\usepackage{algorithm}
\usepackage[
    indLines=true,
    noEnd=true,
    rightComments=true,
    italicComments=true,
]{algpseudocodex}
\algrenewcommand\algorithmicrequire{\textbf{Input:}}
\algrenewcommand\algorithmicensure{\textbf{Output:}}

\usepackage{pgfplots}
\pgfplotsset{compat=1.18}

\declaretheorem[name=Theorem]{theorem}
\declaretheorem[name=Lemma,sibling=theorem]{lemma}

\declaretheorem[name=Definition,sibling=theorem,style=definition]{definition}

\usepackage[
    colorlinks=true,
    linkcolor=blue!62!black,
    citecolor=green!48!black,
    urlcolor=blue!70!black,
    linktoc=page
]{hyperref}
\usepackage[nameinlink,noabbrev]{cleveref}

\crefname{theorem}{Theorem}{Theorems}
\Crefname{theorem}{Theorem}{Theorems}
\crefname{lemma}{Lemma}{Lemmas}
\Crefname{lemma}{Lemma}{Lemmas}
\crefname{corollary}{Corollary}{Corollaries}
\Crefname{corollary}{Corollary}{Corollaries}
\crefname{observation}{Observation}{Observations}
\Crefname{observation}{Observation}{Observations}
\crefname{proposition}{Proposition}{Propositions}
\Crefname{proposition}{Proposition}{Propositions}

\crefname{claim}{Claim}{Claims}
\Crefname{claim}{Claim}{Claims}
\crefname{conjecture}{Conjecture}{Conjectures}
\Crefname{conjecture}{Conjecture}{Conjectures}
\crefname{assumption}{Assumption}{Assumptions}
\Crefname{assumption}{Assumption}{Assumptions}
\crefname{definition}{Definition}{Definitions}
\Crefname{definition}{Definition}{Definitions}
\crefname{remark}{Remark}{Remarks}
\Crefname{remark}{Remark}{Remarks}
\crefname{algorithm}{Algorithm}{Algorithms}
\Crefname{algorithm}{Algorithm}{Algorithms}
\crefname{section}{Section}{Sections}
\Crefname{section}{Section}{Sections}
\crefname{appendix}{Appendix}{Appendices}
\Crefname{appendix}{Appendix}{Appendices}

\tikzset{
  algpxIndentLine/.style={draw=black!100,very thin}
}

\newcommand{\R}{\mathbb{R}}
\newcommand{\eps}{\varepsilon}
\newcommand{\1}{\mathbf{1}}
\newcommand{\cU}{\mathcal{U}}
\newcommand{\cL}{\mathcal{L}}
\newcommand{\ip}[2]{\left\langle #1,#2\right\rangle}
\newcommand{\norm}[1]{\left\lVert #1\right\rVert}
\newcommand{\vol}{\operatorname{vol}}
\newcommand{\tr}{\operatorname{Tr}}

\title{Deterministic Spectral Sparsification in Almost-Linear Time for Dense Graphs}
\author{
    Jason Li\thanks{Carnegie Mellon University. Email: jmli@cs.cmu.edu}
    \and
    Trevor Vaughn\thanks{Carnegie Mellon University. Email: tnvaughn@cmu.edu}
}

\begin{document}
\maketitle

\begin{abstract}
A spectral sparsifier of a weighted graph is a reweighted subgraph whose
Laplacian quadratic form approximates that of the original graph. Let $G$
be a positively weighted $n$-vertex, $m$-edge multigraph, let $0<\varepsilon\le1/2$.
Assuming $m,\varepsilon^{-1}\le n^{O(1)}$ and the ratio of maximum to minimum weight is polynomially bounded, we deterministically construct a
$(1\pm\varepsilon)$-spectral sparsifier with
\[
 O\!\left(n\varepsilon^{-2}\log^{24+o(1)}n\right)
\]
edges in
\[
 m^{1+o(1)}+O\!\left(n^2\varepsilon^{-9/2}\log^{113/2+o(1)}n\right)
\]
time.

The construction has two main ingredients. First, we sparsify an approximately regular expander by partitioning its edges into few matchings and
viewing their normalized Laplacians as an isotropic family of positive
semidefinite matrices. Rather than sample from this family and apply
matrix Chernoff, we select matchings deterministically using a pessimistic
estimator. We evaluate the resulting conditional-expectation scores
in two ways to produce two algorithms: using dense matrix multiplication and sparsely using polynomial
approximations to the inverse square root and matrix exponential. Deterministic expander decomposition,
along with replacing vertices by fixed expander graphs to achieve approximate regularity, extends these algorithms to general
graphs. Second, a recursive blocking scheme applies the dense algorithm
to smaller subgraphs and the sparse algorithm to their union, balancing
their costs. Reusing the resulting algorithm as the dense algorithm gives
$\alpha_{r+1}=3-1/(\alpha_r-1)$, starting from $\alpha_0=\omega$.
After $O(\log n)$ levels, the exponent is $2+O(1/\log n)$, yielding
$m^{1+o(1)}+\widetilde O_{\varepsilon}(n^2)$ time.
\end{abstract}

\section{Introduction}

A spectral sparsifier replaces a graph by a much sparser reweighted
subgraph without substantially changing its Laplacian.  More precisely,
a weighted subgraph \(H\) of \(G\) is a
\((1\pm\eps)\)-spectral sparsifier if
\[
    (1-\eps)L_G\preceq L_H\preceq(1+\eps)L_G.
\]
Equivalently, \(x^\top L_Hx\) approximates \(x^\top L_Gx\) simultaneously
for every vector \(x\).  This contains approximate preservation of every cut as the
special case in which \(x\) is an indicator vector, while also preserving
the linear-algebraic structure used in electrical flows, random walks,
Laplacian solvers, and numerous other graph algorithms.

Spielman and Teng~\cite{ST11} introduced spectral sparsification and gave
the first nearly-linear-time constructions.  Spielman and
Srivastava~\cite{SS11} developed effective-resistance sampling and showed
that \(O(n\eps^{-2}\log n)\) edges suffice.  Batson, Spielman, and
Srivastava~\cite{BSS12} removed the logarithmic loss, proving that every
graph has a spectral sparsifier with \(O(n\eps^{-2})\) edges and giving a
deterministic polynomial-time construction.

There has since been substantial progress on constructing optimal-size
sparsifiers quickly in the randomized setting.  Allen-Zhu, Liao, and
Orecchia~\cite{AZLO15} introduced a regret-minimization formulation and an
almost-quadratic implementation.  Lee and Sun~\cite{LS18}
gave a randomized algorithm producing \(O(n\eps^{-2})\) edges in
\(\widetilde O(m\eps^{-O(1)})\) time.  More recently, Jambulapati, Reis,
and Tian~\cite{JRT24} used efficient discrepancy minimization to obtain
the same optimal sparsity in
\(\widetilde O(m\eps^{-7/2})\) randomized time.  Thus randomized
constructions can simultaneously achieve essentially optimal output size
and near-input-sparsity running time.

The deterministic picture is different.  The barrier argument of
Batson, Spielman, and Srivastava gives optimal sparsity but is slow,
and recent deterministic discrepancy-walk methods are also large polynomial-time constructions~\cite{LWZ25}.

\paragraph{Our result.}
Let \(G\) be a positively weighted \(n\)-vertex, \(m\)-edge multigraph, and
let
\(b=1+\lceil\log_2(w_{\max}/w_{\min})\rceil\).
For \(m,\eps^{-1}\le n^{O(1)}\), we deterministically construct a
\((1\pm\eps)\)-spectral sparsifier with
\[
    O\!\left(
        n(b+\log^2 n)\eps^{-2}\log^{22+o(1)}n
    \right)
\]
edges in
\[
    m^{1+o(1)}
    +
    O\!\left(
        n^2(b+\log^2 n)\eps^{-9/2}
        \log^{109/2+o(1)}n
    \right)
\]
time.  For polynomially bounded weight ratio, this is
\(O(n\eps^{-2}\log^{24+o(1)}n)\) edges in
\[
m^{1+o(1)}+
O\!\left(n^2\eps^{-9/2}\log^{113/2+o(1)}n\right)
\]
 time.

\paragraph{Sparsifying one expander.}
The central part of the algorithm is easiest to understand when \(G\) is
an approximately regular (in terms of combinatorial degree) expander.  Partition its non-loop edges into few matchings
\(M_1,\ldots,M_q\), and let
\(P=(\cL_G|_{\cU_G})^{-1/2}\), where $\cU_G$ is orthogonal to the normalized Laplacian's kernel.  Define
\[
    A_i
    :=
    qP D_G^{-1/2}L_{M_i}D_G^{-1/2}P.
\]
Because the matchings partition the edges,
\(\frac1q\sum_iA_i=I_{\cU_G}\).  Thus the graph has been converted into
an isotropic average of positive semidefinite matrices.

The expander condition controls the size of every matrix in this average.
Cheeger's inequality gives
\(\norm{P}=O(\phi^{-1})\), while approximate regularity and the matching
condition control
\(D_G^{-1/2}L_{M_i}D_G^{-1/2}\).  Together these imply
\(A_i\preceq O(\phi^{-2})I_{\cU_G}\).  Sampling
\(O(\phi^{-2}\eps^{-2}\log n)\) matrices from the average would therefore
give a spectral approximation by matrix Chernoff.  Translating the sampled
matrices back to matchings yields a sparse approximation of the expander.

\paragraph{Derandomizing the matching samples.}
We replace random sampling by conditional expectation.  The matrix
Chernoff proof has a trace-exponential pessimistic estimator that bounds
the probability that the selected average has an eigenvalue that
is too large or too small.  At each iteration,
the average conditional value of this estimator over all possible next
matchings does not increase.  We therefore choose a matching whose
conditional value is no larger than the average.  Repeating this choice
produces deterministically the same spectral guarantee supplied by the
randomized Chernoff argument.

The main algorithmic question is how to evaluate the conditional scores.
All of the matchings' scores are
linear functionals of the same matrix \(W_j\), formed from two matrix
exponentials of the current partial sum.  Once
\(Y_j=D_G^{-1/2}PW_jPD_G^{-1/2}\) is available, the score of a matching is
just a sum of entries of \(Y_j\) over its edges.  Since the matchings
partition the edges, all scores can be obtained in one edge scan.

We implement computing these scores in two different ways.  The dense
implementation forms the inverse square root and matrix exponentials
explicitly using approximate diagonalization and fast matrix
multiplication.  Its principal cost depends on the number of vertices
rather than the number of edges.  The sparse implementation represents
the inverse square root and exponentials by low-degree polynomials and
applies them through sparse matrix--vector products.  Its cost is roughly
the product of the numbers of vertices and edges.

\paragraph{Removing regularity and expansion assumptions.}
Two reductions turn the approximately regular expander algorithm into a general sparsification
algorithm.  First, a high-degree vertex is split into a fiber of
constant-degree copies, and the copies are connected by an auxiliary
expander.  The resulting graph has degrees within a constant factor of a
common scale and retains the original conductance up to constants.
Contracting every fiber maps a sparsifier of the regularized graph back to
a sparsifier of the original graph.

Second, we apply a deterministic expander
decomposition~\cite{Chuzhoy23}. We peel off low-degree vertices, keeping a constant fraction of the edges, then apply the decomposition and sparsify each expander.
A constant fraction of the current edges lies
inside the resulting expanders by the decomposition guarantee on inter-cluster edges and is sparsified in one round.  The
residual edge count decreases geometrically, and the cluster graphs from
all rounds form an edge-disjoint decomposition of the input.  Their local
spectral approximations therefore add to a global one.

For a weighted graph, we first separate the edges into dyadic weight
classes.  The sparse running-time terms sum over the classes because the
classes are edge-disjoint; only the output size and the
vertex-dependent dense work depend on the number of classes. We keep these small by aggregating edges.

\paragraph{Recursive self-improvement.}
The two implementations have complementary strengths.  The dense
implementation is insensitive to the number of edges but costs
\(n^\omega\), whereas the sparse implementation has a factor proportional
to the number of input edges, which could be quadratic.  The self-reduction uses the dense routine
locally to reduce the number of edges and then uses the sparse routine
globally.  Repeating this combination decreases the vertex exponent to
the barrier \(2\).

Suppose that we already have an algorithm whose vertex-dependent running
time is \(N^{\alpha_r}\), up to common parameter and polylogarithmic
factors.  Partition \(V(G)\) into blocks of size at most \(s\).  Every edge
belongs to exactly one graph supported on a pair of blocks, so these
block-pair graphs form an edge-disjoint decomposition of \(G\).  There are
\(O(n^2/s^2)\) such graphs, each on at most \(2s\) vertices.  Sparsifying
them recursively therefore costs
\[
    O\!\left(\frac{n^2}{s^2}s^{\alpha_r}\right)
    =
    O\!\left(n^2s^{\alpha_r-2}\right).
\]
Each block-pair sparsifier has only \(O(s)\) edges, up to the common factors, so their
union has only
\[
    O\!\left(\frac{n^2}{s}\right)
\]
edges while still approximating the entire graph.  The sparse
implementation can consequently compress this union in
\[
    O\!\left(n\cdot\frac{n^2}{s}\right)
    =
    O\!\left(\frac{n^3}{s}\right)
\]
time, again omitting the common factors.

Balancing the two costs gives
\[
    s=n^{1/(\alpha_r-1)}
    \qquad\text{and}\qquad
    \alpha_{r+1}
    =
    3-\frac{1}{\alpha_r-1}.
\]
If \(\delta_r:=\alpha_r-2\) denotes the excess over the quadratic
exponent, then
\[
    \delta_{r+1}=\frac{\delta_r}{1+\delta_r},
    \qquad
    \frac1{\delta_{r+1}}=\frac1{\delta_r}+1.
\]
Thus the reciprocal gap from \(2\) increases by one at every level, starting with $\alpha_0 = \omega$ we have
\[
    \alpha_r
    =
    2+\frac{1}{r+1/(\omega-2)}.
\]
After \(r=\Theta(\log n)\) levels, the remaining gap is
\(O(1/\log n)\), so
\[
    n^{\alpha_r}
    =
    n^2n^{O(1/\log n)}
    =
    O(n^2).
\]
We use accuracy \(\Theta(\eps/\log n)\) at each level, so the accumulated
error is at most \(\eps\); this changes only the 
polylogarithmic and \(\eps\)-dependent factors.

\paragraph{Organization.}
\Cref{sec:preliminaries} introduces the notation and basic facts used throughout.  \Cref{sec:matrix-chernoff} develops the
deterministic matrix-Chernoff selection procedure.
\Cref{sec:expander-sparsification} reduces expander sparsification to that
selection problem and gives the dense and sparse score implementations.
\Cref{sec:global-sparsification} combines degree regularization with a
deterministic expander decomposition and extends the construction to
weighted graphs.  Finally, \cref{sec:self-improvement} develops the
recursive combination and proves the main theorem.

\section{Preliminaries}
\label{sec:preliminaries}

All graphs are finite undirected weighted multigraphs.  Parallel edges are
allowed. We distinguish weighted degree from combinatorial degree throughout.

For $G=(V,E,w)$, let $\deg_G(v)$ be the weighted degree of $v$, including
self-loops, let $D_G$ be the diagonal degree matrix, and let
$\vol_G(S):=\sum_{v\in S}\deg_G(v)$. We also write $\deg^\#_G(v)$ and $\vol_G^\#(S)$ for unweighted (combinatorial) degree and volume, respectively. We count self-loops in weighted degree and volume but ignore them for combinatorial degree and volume. We count parallel non-loop edges with multiplicity for combinatorial degree. Let $\delta_G := \min_{v \in V} \deg_G(v)$ and $\delta^\#_G := \min_{v \in V} \deg^\#_G(v)$; let $\Delta_G$ and $\Delta^\#_G$ be maximum degree and maximum combinatorial degree. We write $E^\times(G)$ to denote all
non-loop edges, and for disjoint $A,B\subseteq V$ we write
$E_G(A,B)$ for the multiset of edges with one endpoint in $A$ and the
other in $B$.  For an edge multiset $F$, let
$w(F):=\sum_{e\in F}w(e)$.  The induced subgraph on $S$ is denoted by
$G[S]$, and the Laplacian is $L_G:=D_G-A_G$.
The normalized Laplacian is
$\cL_G:=D_G^{-1/2}L_GD_G^{-1/2}$.  We write $I_U$ for the
orthogonal projector onto a subspace $U$, viewed as the identity on $U$
and zero on $U^\perp$.
We discard isolated vertices.

The conductance of $G$ is
\[
    \Phi(G):=
    \min_{\emptyset\ne S\subsetneq V}
    \frac{w(E_G(S,V\setminus S))}
         {\min\{\vol_G(S),\vol_G(V\setminus S)\}}.
\]

For symmetric matrices, $A\preceq B$ means that $B-A$ is positive
semidefinite, $\norm{A}$ is the operator norm, and
$\ip{A}{B}:=\tr(AB)$. The following lemma is an easy consequence of Cheeger's inequality.

\begin{lemma}[Cheeger's inequality]
\label{lem:cheeger}
If $\Phi(G)\ge\phi>0$, then $G$ is connected,
$\ker\cL_G=\operatorname{span}(D_G^{1/2}\1)$, and, on
$\cU_G:=(D_G^{1/2}\1)^\perp$,
\[
    \frac{\phi^2}{2}I_{\cU_G}
    \preceq
    \cL_G\big|_{\cU_G}
    \preceq
    2I_{\cU_G}.
\]
\end{lemma}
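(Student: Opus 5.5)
We prove the three assertions in turn, treating the standard Cheeger inequality $\lambda_2(\cL_G)\ge \Phi(G)^2/2$ as a known fact and checking only the bookkeeping needed for the paper's conventions. The setting is weighted multigraphs with self-loops counted in the degree.

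\emph{Connectivity and the kernel.} If $G$ were disconnected, take $S$ to be one connected component. It is nonempty and proper. It has positive volume, since isolated vertices are discarded, so every vertex has positive degree. Also $E_G(S,V\setminus S)=\emptyset$, so $\Phi(G)=0<\phi$, a contradiction. Hence $G$ is connected.

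For the kernel, use the quadratic form
\[
x^\top L_G x=\sum_{uv\in E}w(uv)(x_u-x_v)^2 .
\]
Self-loops contribute zero here: they add $w$ to both $D$ and $A$ at the diagonal, so they cancel in $L_G$. Therefore $L_Gx=0$ forces $x$ to be constant on each component, and with connectivity $\ker L_G=\operatorname{span}(\1)$. Since $\cL_G=D_G^{-1/2}L_GD_G^{-1/2}$ with $D_G$ invertible, $\ker\cL_G=D_G^{1/2}\ker L_G=\operatorname{span}(D_G^{1/2}\1)$. Consequently $\cU_G$ is $\cL_G$-invariant, and the restriction $\cL_G|_{\cU_G}$ is well-defined and symmetric.

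\emph{Upper bound.} Substitute $x=D_G^{-1/2}y$. Using $(a-b)^2\le 2a^2+2b^2$,
\[
y^\top\cL_Gy=\sum_{uv}w(uv)(x_u-x_v)^2\le 2\sum_v \deg_G(v)x_v^2=2\norm{y}^2 .
\]
The inequality only becomes looser when self-loops, which contribute to $\deg_G$ but not to the sum, are present. This gives $\cL_G|_{\cU_G}\preceq 2I_{\cU_G}$.

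\emph{Lower bound.} This step is the main obstacle. It is Cheeger's inequality proper: the smallest eigenvalue of $\cL_G$ on $\cU_G$, namely $\lambda_2(\cL_G)$, is at least $\Phi(G)^2/2\ge\phi^2/2$.

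To justify that the standard statement applies, the plan is the usual sweep-cut argument. Take an eigenvector $y\in\cU_G$ with eigenvalue $\lambda_2$ and set $x=D_G^{-1/2}y$. Shift $x$ by a weighted median so that each side of zero has at most half the volume. Split into positive and negative parts. For one part $z$, the Rayleigh quotient
\[
\frac{\sum w(z_u-z_v)^2}{\sum \deg z_v^2}
\]
is at most $\lambda_2$. Then apply Cauchy--Schwarz to $\sum w\,|z_u^2-z_v^2|$, and use the coarea formula over level sets $\{z^2>t\}$. This yields a threshold set $S$ with $\vol_G(S)\le\vol_G(V)/2$ and
\[
\frac{w(E(S,V\setminus S))}{\vol_G(S)}\le\sqrt{2\lambda_2}.
\]
Hence $\phi\le\sqrt{2\lambda_2}$, i.e.\ $\lambda_2\ge\phi^2/2$.

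The one place needing care is self-loops. They enter $\vol_G$ and $\deg_G$ but no cut, and they enter the Rayleigh denominator but not its numerator. Both quantities are therefore defined consistently with the same $D_G$, and the argument goes through verbatim. Combining the two bounds on the invariant subspace $\cU_G$ gives the claimed sandwich.
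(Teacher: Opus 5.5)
Your proposal is correct and follows the same route as the paper. Positive conductance gives connectivity, conjugation by $D_G^{-1/2}$ transfers $\ker L_G=\operatorname{span}(\1)$ to $\ker\cL_G$, the bound $\cL_G\preceq 2I$ is the standard normalized-Laplacian eigenvalue bound, and the lower bound is the standard Cheeger inequality $\lambda_2(\cL_G)\ge\Phi(G)^2/2$. You do more than the paper by sketching the sweep-cut proof and checking that self-loops, which count in $\deg_G$ and $\vol_G$ but cancel in $L_G$, are handled consistently; the paper simply takes this for granted.
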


\begin{proof}
Since $\Phi(G)>0$, the graph is connected.  Hence
$\ker L_G=\operatorname{span}(\1)$, and conjugation by
$D_G^{-1/2}$ gives
$\ker\cL_G=\operatorname{span}(D_G^{1/2}\1)$.
Cheeger's inequality gives
$\lambda_2(\cL_G)\ge\Phi(G)^2/2\ge\phi^2/2$, while every eigenvalue of a
normalized Laplacian is at most $2$.  Restricting to
$\cU_G=(D_G^{1/2}\1)^\perp$ proves the claim.
\end{proof}

We write $\omega$ for the exponent of square matrix multiplication.

\begin{definition}[Spectral sparsifier]
A weighted graph $H$ on $V(G)$ is a $(1\pm\eps)$-spectral sparsifier of
$G$ if
$(1-\eps)L_G\preceq L_H\preceq(1+\eps)L_G$.
\end{definition}

In every theorem that produces a spectral sparsifier we assume $\eps^{-1} \le n$ as otherwise the whole graph after aggregating parallel edges suffices.

\begin{lemma}[Edge-disjoint composition]
\label{lem:composition}
Suppose $G=G_1\mathbin{\dot\cup}\cdots\mathbin{\dot\cup}G_t$ is an
edge-disjoint union and $H_i$ is a $(1\pm\eps)$-spectral sparsifier of
$G_i$ for every $i\in[t]$.  Then
$H:=H_1\mathbin{\dot\cup}\cdots\mathbin{\dot\cup}H_t$ is a
$(1\pm\eps)$-spectral sparsifier of $G$.
\end{lemma}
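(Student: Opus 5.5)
The plan is to use the fact that the Laplacian is additive over edge-disjoint unions, together with the fact that Loewner inequalities can be summed. All graphs involved live on the common vertex set $V(G)$. A vertex missing from some $G_i$ can be added as an isolated vertex; this contributes a zero row and column to $L_{G_i}$ and to $L_{H_i}$ and changes nothing. For any weighted multigraph $F$ on $V$ we have
\[
    L_F=\sum_{e\in E(F)}w(e)\,b_eb_e^\top ,
\]
where $b_e=\1_u-\1_v$ for $e=uv$ and $b_e=0$ for a self-loop. So $L_F$ is a sum over the edge multiset. Since $G$ is the edge-disjoint union of the $G_i$, its edge multiset is the disjoint union of theirs. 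Hence $L_G=\sum_{i=1}^tL_{G_i}$. In the same way, $H$ is defined as the disjoint union of the $H_i$, so $L_H=\sum_{i=1}^tL_{H_i}$.

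Next I use the hypothesis: for each $i$,
\[
    (1-\eps)L_{G_i}\preceq L_{H_i}\preceq(1+\eps)L_{G_i}.
\]
The positive semidefinite cone is closed under addition. Therefore, if $B_i-A_i\succeq0$ for each $i$, then $\sum_iB_i-\sum_iA_i\succeq0$, and summing the two-sided inequalities over $i$ preserves them. Equivalently, fix a vector $x$ and add the scalar inequalities
\[
    (1-\eps)x^\top L_{G_i}x\le x^\top L_{H_i}x\le(1+\eps)x^\top L_{G_i}x .
\]
Either way we obtain $(1-\eps)L_G\preceq L_H\preceq(1+\eps)L_G$. Finally, $H$ is a reweighted subgraph of $G$ because each $H_i$ is a reweighted subgraph of $G_i$. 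So $H$ is a $(1\pm\eps)$-spectral sparsifier of $G$.

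The only point needing care is the bookkeeping: all Laplacians must be viewed as $|V|\times|V|$ matrices, and parallel edges arising across different $H_i$ must be kept as separate multigraph edges rather than merged. Merging them would not change $L_H$ anyway, since weights add. Self-loops contribute to degrees but not to the Laplacian, so they do not interfere with additivity.
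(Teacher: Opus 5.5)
Your proposal is correct and follows the same route as the paper, whose proof is exactly the two observations that Laplacians add over edge-disjoint unions and that the per-piece Loewner inequalities can be summed. Your extra bookkeeping (padding each piece with isolated vertices, and the remark that $H$ is a reweighted subgraph) is harmless; the subgraph remark is not needed, since the paper's definition only requires $H$ to be a weighted graph on $V(G)$.
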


\begin{proof}
Laplacians add over edge-disjoint unions.  Summing the approximation
inequalities for the $G_i$ proves the result.
\end{proof}

\section{Deterministic Matrix Chernoff}
\label{sec:matrix-chernoff}

This section derandomizes a matrix Chernoff bound which we use for the construction.  Given a positive semidefinite
family whose average is the identity and whose members are uniformly
bounded, we deterministically select a small multiset whose average still
approximates the identity.  The proof is the method of conditional
expectation applied to the matrix-Chernoff pessimistic estimator of
Wigderson and Xiao~\cite{WX08}.  We first give the exact selector and then
show that small additive error in its candidate scores is
sufficient.

\begin{lemma}[Chord bounds for the exponential]
\label{lem:chord-bounds}
If $0\preceq X\preceq I$, then, for every $\theta\in\R$,
\[
    e^{\theta X}
    \preceq
    I+(e^\theta-1)X.
\]
\end{lemma}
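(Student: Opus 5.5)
The plan is to reduce the matrix inequality to a scalar inequality via the spectral theorem. Since $X$ is symmetric with $0\preceq X\preceq I$, diagonalize $X=\sum_k \lambda_k u_ku_k^\top$ with an orthonormal eigenbasis $\{u_k\}$ and eigenvalues $\lambda_k\in[0,1]$. Both sides of the claimed inequality are functions of $X$ defined through this same eigenbasis. Indeed, $e^{\theta X}=\sum_k e^{\theta\lambda_k}u_ku_k^\top$, and $I+(e^\theta-1)X=\sum_k\bigl(1+(e^\theta-1)\lambda_k\bigr)u_ku_k^\top$. The difference is therefore diagonal in $\{u_k\}$, and it is positive semidefinite exactly when each diagonal coefficient is nonnegative. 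So it suffices to prove the scalar inequality
\[
    e^{\theta\lambda}\le 1+(e^\theta-1)\lambda\qquad\text{for all }\lambda\in[0,1].
\]

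The scalar inequality is the chord bound for a convex function. The map $\lambda\mapsto e^{\theta\lambda}$ is convex on $\R$ for every real $\theta$, since its second derivative is $\theta^2e^{\theta\lambda}\ge0$. Write $\lambda=(1-\lambda)\cdot0+\lambda\cdot1$ with $\lambda\in[0,1]$. Convexity then gives
\[
    e^{\theta\lambda}\le(1-\lambda)e^{0}+\lambda e^{\theta}=1+(e^\theta-1)\lambda.
\]
In words, the graph of $e^{\theta\lambda}$ lies below the chord joining $(0,1)$ and $(1,e^\theta)$. This holds for both signs of $\theta$, and for $\theta=0$ it is an equality.

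There is no genuine obstacle here. The only point needing care is that the functional calculus reduction is legitimate: both $e^{\theta X}$ and $I+(e^\theta-1)X$ are functions of the single matrix $X$, so they commute and are simultaneously diagonalizable. This is what lets the eigenvalue-wise scalar bound transfer to the Loewner order. Equivalently, for any vector $x=\sum_kc_ku_k$, the quadratic form of the difference is $\sum_kc_k^2\bigl(1+(e^\theta-1)\lambda_k-e^{\theta\lambda_k}\bigr)\ge0$.
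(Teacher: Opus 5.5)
Your proof is correct and is the same as the paper's. Both apply the chord bound from convexity of $\lambda\mapsto e^{\theta\lambda}$ on $[0,1]$ to each eigenvalue of $X$, and then lift it to the Loewner order via the spectral theorem.
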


\begin{proof}
For $x\in[0,1]$, convexity of $e^{\theta x}$ gives
$e^{\theta x}\le(1-x)+xe^\theta$.  Applying this scalar inequality to
every eigenvalue of $X$ proves the matrix inequality by spectral
functional calculus.
\end{proof}

\begin{algorithm}[H]
\caption{\(\textsc{MatrixChernoffSelect}(A_1,\ldots,A_q,R,\eps,k)\)}
\label{alg:matrix-chernoff-selection}
\begin{algorithmic}[1]
\Require Positive semidefinite \(r\times r\) matrices satisfying
\(\frac1q\sum_iA_i=I\) and \(A_i\preceq RI\), where \(R\ge1\);
\(0<\eps\le1/2\); and \(k\ge4R\eps^{-2}\ln(4r)\)
\Ensure A sequence \(i_1,\ldots,i_k\in[q]\)
\State \(X_i\gets A_i/R\) for every \(i\in[q]\)
\State \(\theta_+\gets\log(1+\eps)\), \(\theta_-\gets-\log(1-\eps)\)
\State \(\rho_+\gets1+\eps/R\), \(\rho_-\gets1-\eps/R\), and \(S\gets0\)
\For{\(j=1,\ldots,k\)}
    \State \(u_j\gets
    e^{-\theta_+(1+\eps)k/R}\rho_+^{\,k-j}\)
    \State \(\ell_j\gets
    e^{\theta_-(1-\eps)k/R}\rho_-^{\,k-j}\)
    \State \(E_+\gets e^{\theta_+S}\) and \(E_-\gets e^{-\theta_-S}\)
    \State Choose \(i_j\) minimizing
    \(u_j\langle E_+,A_i\rangle-\ell_j\langle E_-,A_i\rangle\)
    over \(i\in[q]\)
    \State \(S\gets S+X_{i_j}\)
\EndFor
\State \Return \(i_1,\ldots,i_k\)
\end{algorithmic}
\end{algorithm}

\begin{lemma}[Deterministic matrix Chernoff]
\label{lem:deterministic-chernoff}
Let $A_1,\ldots,A_q$ be positive semidefinite $r\times r$ matrices such
that
\[
    \frac1q\sum_{i=1}^qA_i=I,
    \qquad
    0\preceq A_i\preceq RI
\]
for some $R\ge1$.  Let $0<\eps\le1/2$, and let
\begin{equation}
\label{eq:chernoff-k}
    k
    \ge
    4R\eps^{-2}\ln(4r).
\end{equation}
Then there are indices $i_1,\ldots,i_k\in[q]$, with repetition, such that
\[
    (1-\eps)I
    \preceq
    \frac1k\sum_{j=1}^kA_{i_j}
    \preceq
    (1+\eps)I.
\]
\cref{alg:matrix-chernoff-selection} finds such indices
deterministically.  Each iteration evaluates two matrix exponentials and
one linear score per candidate.
\end{lemma}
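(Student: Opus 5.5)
We follow the standard Wigderson--Xiao argument via a pessimistic estimator. Write $X_i=A_i/R$, so $0\preceq X_i\preceq I$ and $\frac1q\sum_iX_i=I/R$. For a partial selection $S_j=\sum_{t\le j}X_{i_t}$, define
\[
    \Psi_j := e^{-\theta_+(1+\eps)k/R}\rho_+^{\,k-j}\tr e^{\theta_+S_j}
    + e^{\theta_-(1-\eps)k/R}\rho_-^{\,k-j}\tr e^{-\theta_-S_j}.
\]
The failure event is $\lambda_{\max}(S_k)>(1+\eps)k/R$ or $\lambda_{\min}(S_k)<(1-\eps)k/R$. By monotonicity of $\tr\exp$ and positivity of the exponential, each event forces the corresponding trace term of $\Psi_k$ to exceed $1$. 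So it suffices to show $\Psi_0<1$ and that the greedy choice keeps $\Psi_{j}\le\Psi_{j-1}$. Then $\Psi_k<1$, and the chosen indices give $(1\pm\eps)I\succeq$-bounds on $\frac1k\sum_jA_{i_j}=\frac{R}{k}S_k$.

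\textbf{Averaging step.} Fix $S=S_{j-1}$ and average over uniform $i\in[q]$. Golden--Thompson gives
\[
    \tr e^{\theta_+(S+X_i)}\le\tr\bigl(e^{\theta_+S}e^{\theta_+X_i}\bigr)\le\tr e^{\theta_+S}+(e^{\theta_+}-1)\ip{e^{\theta_+S}}{X_i},
\]
where the second inequality uses \cref{lem:chord-bounds} and the fact that the trace against a PSD matrix is monotone. Averaging with $\frac1q\sum X_i=I/R$ and $e^{\theta_+}-1=\eps$ gives at most $(1+\eps/R)\tr e^{\theta_+S}=\rho_+\tr e^{\theta_+S}$. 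Symmetrically, with $\theta=-\theta_-$ we have $e^{-\theta_-}-1=-\eps$, so the average of the lower term is at most $\rho_-\tr e^{-\theta_-S}$. The $\rho^{k-j}$ factors are designed exactly to absorb these, so $\frac1q\sum_i\Psi_j(i)\le\Psi_{j-1}$.

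\textbf{Matching the algorithm's score.} The algorithm minimizes the linearized score $u_j\ip{E_+}{A_i}-\ell_j\ip{E_-}{A_i}$ rather than $\Psi_j(i)$ itself. Define instead the upper bound
\[
    \widetilde\Psi_j(i):=u_j\bigl(\tr E_++\tfrac{\eps}{R}\ip{E_+}{A_i}\bigr)+\ell_j\bigl(\tr E_--\tfrac{\eps}{R}\ip{E_-}{A_i}\bigr).
\]
This satisfies $\Psi_j(i)\le\widetilde\Psi_j(i)$ termwise by the chord bounds, and it is an affine function of the algorithm's score with positive slope $\eps/R$. Its average over $i$ equals $\rho_+u_j\tr E_++\rho_-\ell_j\tr E_-=\Psi_{j-1}$, using $u_{j-1}=u_j\rho_+$ and $\ell_{j-1}=\ell_j\rho_-$. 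Hence the minimizer $i_j$ of the score satisfies $\Psi_j\le\widetilde\Psi_j(i_j)\le\Psi_{j-1}$.

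\textbf{Initial bound (the main obstacle).} The step needing care is $\Psi_0<1$, which requires the scalar Chernoff estimates $\rho_+^k e^{-\theta_+(1+\eps)k/R}\le\bigl(e^{\eps}/(1+\eps)^{1+\eps}\bigr)^{k/R}\le e^{-\eps^2k/(3R)}$, using $1+x\le e^x$, and similarly $\rho_-^k e^{\theta_-(1-\eps)k/R}\le\bigl(e^{-\eps}/(1-\eps)^{1-\eps}\bigr)^{k/R}\le e^{-\eps^2k/(2R)}$. Then
\[
    \Psi_0\le r\bigl(e^{-\eps^2k/(3R)}+e^{-\eps^2k/(2R)}\bigr).
\]
Under \eqref{eq:chernoff-k}, $\eps^2k/(3R)\ge\frac43\ln(4r)$, so each term is at most $r(4r)^{-4/3}<1/4$ and $\Psi_0<1/2<1$. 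The only delicate part is checking the elementary inequalities $(1+\eps)\ln(1+\eps)-\eps\ge\eps^2/3$ and $(1-\eps)\ln(1-\eps)+\eps\ge\eps^2/2$ on $(0,1/2]$, which we verify by Taylor expansion.

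Finally, each iteration computes exactly the two exponentials $E_\pm$ and one inner product per candidate, as claimed in the statement.
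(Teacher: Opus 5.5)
Your proposal is correct and follows the paper's proof essentially step for step. It uses the same Wigderson--Xiao estimator (your $\Psi_j$ is the paper's $\Phi_j$), the same Golden--Thompson plus chord-bound linearization whose candidate average equals $\Psi_{j-1}$ via $u_{j-1}=\rho_+u_j$ and $\ell_{j-1}=\rho_-\ell_j$, and the same scalar Chernoff estimates giving $\Psi_0<1/2$. The one difference is that you skip the paper's interpretation of the estimator as a bound on the conditional failure probability under random completion, and instead check directly that $\Psi_k<1$ rules out failure; that direct check is all the deterministic argument actually needs.
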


\begin{proof}
We first define a two-sided
trace-exponential estimator and show that it dominates the conditional
probability of failure.  We then show that its initial value is less than
one.  Finally, we show that the average one-step upper bound over all
candidates equals the current estimator, so conditional expectation
always supplies a nonincreasing choice.

Set
\[
    X_i:=A_i/R,
    \qquad
    \frac1q\sum_{i=1}^qX_i=\frac1R I,
    \qquad
    0\preceq X_i\preceq I.
\]
Define
\[
\begin{aligned}
    \theta_+&:=\ln(1+\eps),
    &\rho_+&:=1+\frac{\eps}{R},\\
    \theta_-&:=-\ln(1-\eps),
    &\rho_-&:=1-\frac{\eps}{R}.
\end{aligned}
\]
After fixing $i_1,\ldots,i_j$, write
\[
    S_j:=\sum_{h=1}^jX_{i_h}
\]
and define
\begin{equation}
\label{eq:psd-estimator}
\begin{aligned}
    \Phi_j
    :={}&
    e^{-\theta_+(1+\eps)k/R}
    \rho_+^{k-j}\tr e^{\theta_+S_j}\\
    &+
    e^{\theta_-(1-\eps)k/R}
    \rho_-^{k-j}\tr e^{-\theta_-S_j}.
\end{aligned}
\end{equation}

\paragraph{The conditional failure bound.}
We first show that $\Phi_j$ dominates the conditional failure
probability.  Condition on $i_1,\ldots,i_j$, and let
$J_{j+1},\ldots,J_k$ be independent uniform elements of $[q]$.  By
\cref{lem:chord-bounds},
\[
    \mathbb E e^{\theta_+X_J}
    \preceq
    I+\eps\mathbb E X_J
    =\rho_+I
\]
and
\[
    \mathbb E e^{-\theta_-X_J}
    \preceq
    I-\eps\mathbb E X_J
    =\rho_-I.
\]
Repeated applications of the Golden--Thompson inequality give
\[
\begin{aligned}
    \mathbb E\tr\exp\!\left(
        \theta_+S_j+
        \theta_+\sum_{h=j+1}^kX_{J_h}
    \right)
    &\le
    \rho_+^{k-j}\tr e^{\theta_+S_j},\\
    \mathbb E\tr\exp\!\left(
        -\theta_-S_j-
        \theta_-\sum_{h=j+1}^kX_{J_h}
    \right)
    &\le
    \rho_-^{k-j}\tr e^{-\theta_-S_j}.
\end{aligned}
\]
For the upper tail,
\begin{align*}
&\Pr\!\left[
    \lambda_{\max}\!\left(
        S_j+\sum_{h=j+1}^kX_{J_h}
    \right)>(1+\eps)\frac{k}{R}
    \,\middle|\,
    i_1,\ldots,i_j
\right]\\
&\quad\le
\Pr\!\left[
    \tr\exp\!\left(
        \theta_+S_j+
        \theta_+\sum_{h=j+1}^kX_{J_h}
    \right)
    >e^{\theta_+(1+\eps)k/R}
    \,\middle|\,
    i_1,\ldots,i_j
\right]\\
&\quad\le
    e^{-\theta_+(1+\eps)k/R}
    \rho_+^{k-j}\tr e^{\theta_+S_j}.
\end{align*}
The first inequality uses
$\tr e^Y\ge e^{\lambda_{\max}(Y)}$; the second is Markov's
inequality.  Similarly,
\begin{align*}
&\Pr\!\left[
    \lambda_{\min}\!\left(
        S_j+\sum_{h=j+1}^kX_{J_h}
    \right)<(1-\eps)\frac{k}{R}
    \,\middle|\,
    i_1,\ldots,i_j
\right]\\
&\quad\le
\Pr\!\left[
    \tr\exp\!\left(
        -\theta_-S_j-
        \theta_-\sum_{h=j+1}^kX_{J_h}
    \right)
    >e^{-\theta_-(1-\eps)k/R}
    \,\middle|\,
    i_1,\ldots,i_j
\right]\\
&\quad\le
    e^{\theta_-(1-\eps)k/R}
    \rho_-^{k-j}\tr e^{-\theta_-S_j}.
\end{align*}
A union bound proves that $\Phi_j$ dominates the conditional probability
of either failure.

\paragraph{The initial value.}
We next bound $\Phi_0$.  Since $\ln(1+x)\le x$,
\[
\begin{aligned}
    e^{-\theta_+(1+\eps)k/R}\rho_+^k
    &\le
    \exp\!\left(
        -\frac{k}{R}
        \bigl((1+\eps)\ln(1+\eps)-\eps\bigr)
    \right),\\
    e^{\theta_-(1-\eps)k/R}\rho_-^k
    &\le
    \exp\!\left(
        -\frac{k}{R}
        \bigl(\eps+(1-\eps)\ln(1-\eps)\bigr)
    \right).
\end{aligned}
\]
For $0\le\eps\le1/2$, both quantities in parentheses are at least
$\eps^2/3$: their values and first derivatives vanish at zero, while their
second derivatives are at least $2/3$ and $1$, respectively.  Hence
\[
    \Phi_0
    \le
    2r\exp\!\left(-\frac{k\eps^2}{3R}\right)
    \le\frac12
\]
by \eqref{eq:chernoff-k}.

\paragraph{Decrease of the estimator.}
It remains to prove that the choice in
\cref{alg:matrix-chernoff-selection} does not increase the
estimator.
Suppose $i_1,\ldots,i_{j-1}$ have been fixed and write $S=S_{j-1}$.
For a candidate $i$, Golden--Thompson and
\cref{lem:chord-bounds} give
\[
\begin{aligned}
    &\tr e^{\theta_+(S+X_i)}
    \le
    \tr\left(e^{\theta_+S}e^{\theta_+X_i}\right)
    \le
    \tr e^{\theta_+S}
    +\eps\ip{e^{\theta_+S}}{X_i},\\
    &\tr e^{-\theta_-(S+X_i)}
    \le
    \tr e^{-\theta_-S}
    -\eps\ip{e^{-\theta_-S}}{X_i}.
\end{aligned}
\]
Set
\[
\begin{aligned}
    u_j:=e^{-\theta_+(1+\eps)k/R}\rho_+^{k-j}, \qquad
    \ell_j:=e^{\theta_-(1-\eps)k/R}\rho_-^{k-j}.
\end{aligned}
\]
The candidate-dependent part of the resulting upper bound on $\Phi_j$ is
\begin{equation}
\label{eq:chernoff-score}
    \eps\left(
        u_j\ip{e^{\theta_+S}}{X_i}
        -
        \ell_j\ip{e^{-\theta_-S}}{X_i}
    \right).
\end{equation}
Averaging this upper bound over $i$ and using
$\mathbb E X_i=I/R$ gives exactly $\Phi_{j-1}$, because
$u_{j-1}=\rho_+u_j$ and $\ell_{j-1}=\rho_-\ell_j$.
Therefore some candidate makes $\Phi_j\le\Phi_{j-1}$; choose one such
candidate.

Inductively, $\Phi_k\le\Phi_0<1$.  Since $\Phi_k$ dominates the
indicator of either deterministic failure, neither failure occurs.  Thus
\[
    (1-\eps)\frac{k}{R}I
    \preceq
    S_k
    \preceq
    (1+\eps)\frac{k}{R}I.
\]
Multiplying by $R/k$ and substituting $X_i=A_i/R$ proves the claim.
\end{proof}

\begin{lemma}[Inexact candidate scores]
\label{lem:inexact-scores}
In the setting of \cref{lem:deterministic-chernoff}, suppose that at every
iteration the candidate scores
\[
    u_j\ip{e^{\theta_+S}}{A_i}
    -
    \ell_j\ip{e^{-\theta_-S}}{A_i}
\]
are computed for all candidates with additive error at most
\[
    \zeta:=\frac{R}{16\eps k}.
\]
Choosing a candidate of minimum approximate value still yields
\[
    (1-\eps)I
    \preceq
    \frac1k\sum_{j=1}^kA_{i_j}
    \preceq
    (1+\eps)I.
\]
\end{lemma}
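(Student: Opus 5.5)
The plan is to rerun the potential argument from the proof of \cref{lem:deterministic-chernoff} and let the estimator $\Phi_j$ of \eqref{eq:psd-estimator} grow by a small additive amount per iteration. That proof already gives $\Phi_0\le 1/2$ and shows that $\Phi_k<1$ rules out both failures. So it suffices to show the total additive slack over all $k$ iterations is less than $1/2$.

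\textbf{Rewrite the bound in terms of the $A$-scores.} Fix iteration $j$ and write $S=S_{j-1}$. Golden--Thompson together with \cref{lem:chord-bounds} gives, for every candidate $i$,
\[
\Phi_j(i)\le u_j\tr e^{\theta_+S}+\ell_j\tr e^{-\theta_-S}+\frac{\eps}{R}\,s_i .
\]
Here $s_i:=u_j\ip{e^{\theta_+S}}{A_i}-\ell_j\ip{e^{-\theta_-S}}{A_i}$ is the exact score, and I use $X_i=A_i/R$ to convert \eqref{eq:chernoff-score}. The first two terms do not depend on $i$. Call the right-hand side $B(i)$. As in the exact proof, averaging $B(i)$ over $i\in[q]$ gives exactly $\Phi_{j-1}$, because $u_{j-1}=\rho_+u_j$ and $\ell_{j-1}=\rho_-\ell_j$. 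Hence $\min_i s_i\le \frac1q\sum_i s_i$, and the minimizer $i^\ast$ of the exact score satisfies $B(i^\ast)\le\Phi_{j-1}$.

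\textbf{Compare the chosen candidate with the exact minimizer.} Let $\tilde s_i$ be the approximate scores, with $|\tilde s_i-s_i|\le\zeta$, and let $i_j$ minimize $\tilde s_i$. Then
\[
s_{i_j}\le\tilde s_{i_j}+\zeta\le\tilde s_{i^\ast}+\zeta\le s_{i^\ast}+2\zeta .
\]
Therefore
\[
\Phi_j\le B(i_j)\le B(i^\ast)+\frac{2\eps\zeta}{R}\le\Phi_{j-1}+\frac{2\eps}{R}\cdot\frac{R}{16\eps k}=\Phi_{j-1}+\frac1{8k}.
\]

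\textbf{Conclude.} Summing over the $k$ iterations gives $\Phi_k\le\Phi_0+1/8\le 5/8<1$. The conditional failure bound in the proof of \cref{lem:deterministic-chernoff} is deterministic once all indices are fixed, so $\Phi_k$ still dominates the indicator of either failure. Neither failure occurs, which yields $(1-\eps)\frac kR I\preceq S_k\preceq(1+\eps)\frac kR I$. Multiplying by $R/k$ gives the stated bound on $\frac1k\sum_j A_{i_j}$.

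The only step needing care is the scaling. The scores in the statement are taken with respect to $A_i$, not $X_i$, so the factor $\eps/R$ must be tracked correctly; with that factor, $\zeta$ is calibrated so the per-step loss is exactly $1/(8k)$. There is no real obstacle beyond this bookkeeping.
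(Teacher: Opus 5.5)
Your proposal is correct and follows the paper's proof: both bound the exact score of the approximately-minimizing candidate by the average exact score plus $2\zeta$, use the factor $\eps/R$ to get a per-step estimator increase of at most $1/(8k)$, and conclude $\Phi_k\le 1/2+1/8<1$. The only cosmetic difference is that you compare against the exact minimizer $i^\ast$ rather than against the average of the approximate scores, which gives the same $2\zeta$ slack.
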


\begin{proof}
Fix an iteration and let $\sigma_i$ and $\widetilde\sigma_i$ denote the
exact and approximate scores of candidate $i$, respectively.  If
$i^\star$ minimizes the approximate score, then
\[
    \sigma_{i^\star}
    \le \widetilde\sigma_{i^\star}+\zeta
    \le \frac1q\sum_{i=1}^q\widetilde\sigma_i+\zeta
    \le \frac1q\sum_{i=1}^q\sigma_i+2\zeta.
\]
Because $q^{-1}\sum_iA_i=I$, the average exact score is precisely the
score obtained by replacing $A_i$ with $I$.

In the one-step estimator bound from
\eqref{eq:chernoff-score}, scores written in terms of $A_i$ are
multiplied by $\eps/R$.  The selected candidate can therefore increase
the estimator by at most
\[
    \frac{\eps}{R}\cdot2\zeta
    \le \frac1{8k}.
\]
Consequently, after $j$ inexact choices,
\[
    \Phi_j
    \le \Phi_0+\frac{j}{8k}
    \le \frac12+\frac18
    =\frac58<1.
\]
At $j=k$, either failure would force the corresponding term of
$\Phi_k$ to be at least one, a contradiction.  Hence neither failure
occurs.
\end{proof}

\section{Sparsifying an Expander}
\label{sec:expander-sparsification}

We now convert an expanding graph into an instance of the selection
problem from the preceding section.  Conjugation by the inverse square
root of the normalized Laplacian makes the normalized edge Laplacians sum
to the identity.  A decomposition into matchings groups these edges into 
few matrices, while expansion and degree regularity ensure that no group
has large operator norm.  Deterministic matrix Chernoff can then select
only a small number of matchings.  After establishing this reduction, we
give two implementations of its scores: a dense implementation based on
matrix multiplication and a sparse implementation based on 
matrix--vector products to evaluate polynomials approximating the relevant matrix functions.

Let $G$ be a connected weighted
multigraph, possibly with self-loops, with $\Phi(G)\ge\phi$, suppose every
non-loop edge has weight at most $\beta$, and partition the non-loop edges
into matchings
$E^\times(G)=M_1\mathbin{\dot\cup}\cdots\mathbin{\dot\cup}M_q$.
Write $\cU:=(D_G^{1/2}\1)^\perp$, and let
$P:=(\cL|_{\cU})^{-1/2}$, extended by zero on
$\operatorname{span}(D_G^{1/2}\1)$.  Define
\begin{equation}
\label{eq:Ai}
    A_i
    :=
    qP D_G^{-1/2}L_{M_i}D_G^{-1/2}P.
\end{equation}
We regard the \(A_i\)'s as operators on \(\cU\).  Thus the identity,
trace, and matrix functions appearing in
\cref{alg:matrix-chernoff-selection} are taken on this
\((n-1)\)-dimensional subspace.

Then
\[
    \frac1q\sum_{i=1}^qA_i=I_\cU.
\]
Since
$L_{M_i}\preceq2\beta I$, \cref{lem:cheeger} gives
\begin{equation}
\label{eq:R-general}
    0\preceq A_i\preceq RI_\cU,
    \qquad
    R:=\frac{4\beta q}{\delta \phi^2}.
\end{equation}

Set
\begin{equation}
\label{eq:k-matching}
    k
    :=
    \left\lceil
        4R\eps^{-2}\ln(4n)
    \right\rceil
\end{equation}
as required by \cref{lem:deterministic-chernoff}.  The resulting
selection procedure is summarized in
\cref{alg:matching-sparsifier}.

\begin{algorithm}[H]
\caption{\(\textsc{MatchingSparsifier}
(G,M_1,\ldots,M_q,\beta,\phi,\eps)\)}
\label{alg:matching-sparsifier}
\begin{algorithmic}[1]
\Require A connected graph \(G\) with \(\Phi(G)\ge\phi\), non-loop
weights at most \(\beta\), and a matching partition
\(M_1,\ldots,M_q\) of \(E^\times(G)\)
\Ensure A \((1\pm\eps)\)-spectral sparsifier of \(G\)
\State Form \(\cL\), \(\cU\), \(P\), and \(A_1,\ldots,A_q\) as in
\eqref{eq:Ai}
\State \(R\gets4\beta q/(\delta \phi^2)\) and
\(k\gets\lceil4R\eps^{-2}\ln(4n)\rceil\)
\State \((i_1,\ldots,i_k)\gets
\Call{MatrixChernoffSelect}{A_1,\ldots,A_q,R,\eps,k}\)
\State \Return the weighted multigraph
\(\frac{q}{k}(M_{i_1}\mathbin{\dot\cup}\cdots
\mathbin{\dot\cup}M_{i_k})\)
\end{algorithmic}
\end{algorithm}

By \cref{lem:deterministic-chernoff}, the selected indices satisfy
\[
    (1-\eps)I_\cU
    \preceq
    \frac1k\sum_{j=1}^kA_{i_j}
    \preceq
    (1+\eps)I_\cU.
\]
Conjugating by $\cL^{1/2}|_\cU$ and then by $D_G^{1/2}$ shows that the
output
\begin{equation}
\label{eq:matching-output}
    H
    :=
    \frac{q}{k}
    \left(
        M_{i_1}\mathbin{\dot\cup}\cdots\mathbin{\dot\cup}M_{i_k}
    \right)
\end{equation}
is a $(1\pm\eps)$-spectral sparsifier of $G$.  Since every selected
matching has at most $n/2$ edges,
\begin{equation}
\label{eq:expander-size}
    |E(H)|
    =
    O\!\left(nR\eps^{-2}\log(n)\right).
\end{equation}

It remains to implement the score evaluations in
\cref{alg:matrix-chernoff-selection}.  We give a dense
implementation based on explicitly computing the matrices and a sparse
implementation based on polynomial approximations.

\subsection{Dense Score Evaluation}
We first record the score normalization and error criterion shared by the
dense and sparse implementations, and then give the dense implementation.
The sparse subsection will reuse these estimates and only needs to show
how polynomial evaluation attains the required componentwise accuracy.

Throughout this subsection, \(S,U_j,V_j,Q_{+,j}\), and \(Q_{-,j}\) are
operators on \(\cU\); accordingly, \(I\), \(\tr\), and the matrix
exponential refer to their restrictions to \(\cU\).  We identify these
operators with their zero extensions to \(\cU^\perp\) whenever they are
sandwiched by \(P\).

At iteration $j$, let $S$ be the sum of the previously selected matrices
$X_i=A_i/R$.  Define
\[
    W_j
    :=
    u_j e^{\theta_+S}
    -
    \ell_j e^{-\theta_-S},
\]
where $u_j$ and $\ell_j$ are as in the proof of
\cref{lem:deterministic-chernoff}.  The candidate-independent terms may be
omitted, so it is enough to evaluate $\ip{W_j}{A_i}$.

Assume inductively that the preceding approximate choices met the approximate score
tolerance in \cref{lem:inexact-scores}.  Set
\[
\begin{aligned}
    U_j&:=u_{j-1}e^{\theta_+S},&
    V_j&:=\ell_{j-1}e^{-\theta_-S}.
\end{aligned}
\]
By \cref{lem:inexact-scores},
\[
    \tr U_j+\tr V_j=\Phi_{j-1}<\frac34.
\]
Hence $U_j$ and $V_j$ are positive semidefinite and have operator norm at
most one.  Since $u_j=\rho_+^{-1}u_{j-1}$ and
$\ell_j=\rho_-^{-1}\ell_{j-1}$,
\[
    W_j=\rho_+^{-1}U_j-\rho_-^{-1}V_j,
    \qquad
    \rho_+^{-1}\le1,
    \quad
    \rho_-^{-1}\le2.
\]
In particular, both exponentials can be evaluated by shifting by an identity matrix.  Define
\begin{equation}
\label{eq:Qpm}
    Q_{+,j}:=-(\log u_{j-1})I-\theta_+S,
    \qquad
    Q_{-,j}:=-(\log\ell_{j-1})I+\theta_-S.
\end{equation}
Then
\[
    U_j=e^{-Q_{+,j}},
    \qquad
    V_j=e^{-Q_{-,j}}.
\]
The norm bounds on \(U_j\) and \(V_j\) imply that
\(Q_{+,j},Q_{-,j}\succeq0\).  First,
\[
    Q_{+,j}
    \preceq
    -(\log u_{j-1})I
    \preceq
    \frac{2\eps k}{R}I.
\]
For the other exponential, \(\tr U_j<1\) implies
\(U_j\prec I\).  Since \(U_j=u_{j-1}e^{\theta_+S}\), it follows that
\[
    S
    \preceq
    \frac{-\log u_{j-1}}{\theta_+}I
    \preceq
    \frac{(1+\eps)k}{R}I.
\]
Consequently,
\[
\begin{aligned}
    Q_{-,j}
    =
    -(\log\ell_{j-1})I+\theta_-S
    \preceq
    \left(
        -\log\ell_{j-1}
        +\frac{\theta_-}{\theta_+}(-\log u_{j-1})
    \right)I.
\end{aligned}
\]
From the definitions we have
\[
\begin{aligned}
    -\log u_{j-1}
    &=
    \frac{\theta_+(1+\eps)k}{R}
    -(k-j+1)\log\rho_+,\\
    -\log\ell_{j-1}
    &=
    -\frac{\theta_-(1-\eps)k}{R}
    -(k-j+1)\log\rho_-.
\end{aligned}
\]
Thus,
\[
\begin{aligned}
    &-\log\ell_{j-1}
    +\frac{\theta_-}{\theta_+}(-\log u_{j-1})\\
    &\qquad=
    \frac{2\eps\theta_-k}{R}
    +(k-j+1)
    \left(
        -\log\rho_-
        -\frac{\theta_-}{\theta_+}\log\rho_+
    \right)\\
    &\qquad\le
    \frac{2\eps\theta_-k}{R}
    +(k-j+1)(-\log\rho_-).
\end{aligned}
\]
The inequality uses \(\rho_+>1\).  Since \(0<\eps\le1/2\) and \(R\ge1\),
we have \(\theta_-=-\log(1-\eps)\le\log2<1\) and
\[
    -\log\rho_-
    =
    -\log\!\left(1-\frac{\eps}{R}\right)
    \le\frac{2\eps}{R}.
\]
Therefore
\[
    Q_{-,j}
    \preceq
    \left(
        \frac{2\eps k}{R}
        +\frac{2\eps(k-j+1)}{R}
    \right)I
    \preceq
    \frac{4\eps k}{R}I.
\]
Thus
\begin{equation}
\label{eq:Q-range}
    0\preceq Q_{+,j}\preceq\frac{2\eps k}{R}I,
    \qquad
    0\preceq Q_{-,j}\preceq\frac{4\eps k}{R}I.
\end{equation}

For the full-space implementations, let \(\Pi:=I_{\cU}\), extend \(S\)
by zero on \(\cU^\perp\), and define
\[
    Q_{+,j}^{\circ}
    :=
    -(\log u_{j-1})\Pi-\theta_+S,
    \qquad
    Q_{-,j}^{\circ}
    :=
    -(\log\ell_{j-1})\Pi+\theta_-S.
\]
These matrices vanish on \(\cU^\perp\) and restrict to \(Q_{+,j}\) and
\(Q_{-,j}\) on \(\cU\).

We will use the following elementary bound in both
implementations.

\begin{lemma}[Perturbation of a negative exponential]
\label{lem:exp-perturbation}
If $A,B\succeq0$, then
\[
    \norm{e^{-A}-e^{-B}}\le\norm{A-B}.
\]
\end{lemma}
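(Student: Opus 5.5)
We prove the bound by interpolating between the two exponents along a straight line and integrating the derivative, using Duhamel's formula. Set $C(t):=(1-t)B+tA$ for $t\in[0,1]$. Since $A,B\succeq0$, every $C(t)$ is positive semidefinite as a convex combination. The fundamental theorem of calculus together with Duhamel's formula gives
\[
    e^{-A}-e^{-B}
    =
    \int_0^1\frac{d}{dt}e^{-C(t)}\,dt
    =
    -\int_0^1\int_0^1
        e^{-sC(t)}(A-B)e^{-(1-s)C(t)}\,ds\,dt.
\]
Duhamel's formula, $\frac{d}{dt}e^{M(t)}=\int_0^1e^{sM(t)}M'(t)e^{(1-s)M(t)}\,ds$, is standard. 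If we want a self-contained argument, it follows by differentiating $e^{-sC}e^{-(1-s)C'}$-type products, or by writing the exponential as a limit of $(I-C/N)^N$ and telescoping.

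Next we bound the integrand in operator norm using submultiplicativity:
\[
    \norm{e^{-sC(t)}(A-B)e^{-(1-s)C(t)}}
    \le
    \norm{e^{-sC(t)}}\,\norm{A-B}\,\norm{e^{-(1-s)C(t)}}.
\]
Since $C(t)\succeq0$ and $s,1-s\ge0$, the eigenvalues of $e^{-sC(t)}$ and $e^{-(1-s)C(t)}$ lie in $(0,1]$, so both norms are at most $1$. Integrating the resulting bound $\norm{A-B}$ over the unit square gives $\norm{e^{-A}-e^{-B}}\le\norm{A-B}$.

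The only step that needs care is justifying Duhamel's formula for noncommuting $A$ and $B$, since naive scalar reasoning fails there. If we prefer to avoid it, there is a discrete alternative. Write
\[
    e^{-A}-e^{-B}
    =
    \sum_{m=0}^{N-1}
        e^{-A(m+1)/N}\bigl(e^{-A/N}\dots\bigr),
\]
that is, telescope $e^{-A}-e^{-B}=\sum_{m}e^{-(m+1)A/N}\bigl(e^{-A/N}\bigr)^{0}\cdots$ in the usual form
\[
    X^N-Y^N
    =
    \sum_{m=0}^{N-1}X^{m}(X-Y)Y^{N-1-m},
\]
with $X=e^{-A/N}$ and $Y=e^{-B/N}$. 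Both $X$ and $Y$ have norm at most $1$, and
\[
    \norm{X-Y}\le\norm{A-B}/N+O(N^{-2}).
\]
Summing the $N$ terms gives $\norm{e^{-A}-e^{-B}}\le\norm{A-B}+O(1/N)$, and letting $N\to\infty$ yields the claim.
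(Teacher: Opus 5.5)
Your proof is correct, but it takes a heavier route than the paper's.

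You interpolate the exponents linearly, $C(t)=(1-t)B+tA$. Because $C(t)$ does not commute with $C'(t)=A-B$, differentiating $e^{-C(t)}$ then requires Duhamel's formula. That costs a double integral and a standard but nontrivial identity whose justification you only sketch.

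The paper instead interpolates the \emph{product}, $F(t)=e^{-(1-t)A}e^{-tB}$. Each factor is the exponential of a scalar multiple of a fixed matrix, so the ordinary product rule gives $F'(t)=e^{-(1-t)A}(A-B)e^{-tB}$ directly. Integrating over $[0,1]$ yields $e^{-A}-e^{-B}=-\int_0^1 e^{-(1-t)A}(A-B)e^{-tB}\,dt$, where both outer factors have norm at most one. No convexity observation about $C(t)$ and no non-commutative differentiation formula is needed. Your hint about ``differentiating products of exponentials'' is essentially this trick, and following it through would have made Duhamel unnecessary.

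Your discrete alternative is also valid once the garbled first display is replaced by the clean identity $X^N-Y^N=\sum_{m=0}^{N-1}X^m(X-Y)Y^{N-1-m}$. With $X=e^{-A/N}$ and $Y=e^{-B/N}$, it is exactly a Riemann-sum discretization of the paper's single integral. The $O(N^{-2})$ remainder in $\norm{X-Y}$, whose constant depends on $\norm{A}$ and $\norm{B}$, vanishes in the limit.

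Both arguments give the same constant $1$. The Duhamel version is the more familiar template, while the product interpolation is shorter and entirely elementary.
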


\begin{proof}
Set \(F(t):=e^{-(1-t)A}e^{-tB}\).  Differentiating and integrating from
\(0\) to \(1\) gives
\[
    e^{-A}-e^{-B}
    =-
    \int_0^1
    e^{-(1-t)A}(A-B)e^{-tB}\,dt.
\]
Both exponential factors have operator norm at most one, so taking norms
proves the claim.
\end{proof}

\paragraph{The score matrix.}
Let
\[
    \overline Y_j:=PW_jP,
    \qquad
    Y_j:=D_G^{-1/2}\overline Y_jD_G^{-1/2}.
\]
Then
\begin{equation}
\label{eq:dense-score-edge-scan}
\begin{aligned}
    \ip{W_j}{A_i}
    &=q\ip{Y_j}{L_{M_i}}\\
    &=q\sum_{\{u,v\}\in M_i}
        w(uv)\left(
            (Y_j)_{uu}+(Y_j)_{vv}-2(Y_j)_{uv}
        \right).
\end{aligned}
\end{equation}
Thus all candidate scores are obtained by one scan over the edges after
$Y_j$ has been formed.

It will therefore suffice in either implementation to approximate
$\overline Y_j$ in operator norm.  If
$\norm{\widetilde{\overline Y}_j-\overline Y_j}\le\tau_Y$, then for every
matching $M_i$,
\begin{equation}
\label{eq:score-error-reduction}
\begin{aligned}
q\left|
 \ip{\widetilde{\overline Y}_j-\overline Y_j}
      {D_G^{-1/2}L_{M_i}D_G^{-1/2}}
\right|
&\le
q\tau_Y\tr(D_G^{-1/2}L_{M_i}D_G^{-1/2})\\
&\le \frac{\beta qn}{\delta}\tau_Y
=\frac{Rn\phi^2}{4}\tau_Y
\le Rn\tau_Y.
\end{aligned}
\end{equation}
The second inequality uses that $M_i$ is a matching whose edge weights
are at most $\beta$, and the equality uses
$R=4\beta q/(\delta \phi^2)$.  Consequently,
\begin{equation}
\label{eq:tau-Y}
    \tau_Y:=\frac{1}{64n\eps k}
\end{equation}
is sufficient for the tolerance in \cref{lem:inexact-scores}.

We also record how errors in the three factors of $\overline Y_j$
propagate.  Suppose
\[
    \norm{\widetilde P-P}\le\tau_P,
    \qquad
    \norm{\widetilde U_j-U_j},
    \norm{\widetilde V_j-V_j}\le\tau_E,
\]
where $\tau_P,\tau_E\le1$, and put
\[
    \widetilde W_j
    :=
    \rho_+^{-1}\widetilde U_j
    -
    \rho_-^{-1}\widetilde V_j,
    \qquad
    \widetilde{\overline Y}_j
    :=
    \widetilde P\widetilde W_j\widetilde P.
\]
Since $\norm P=O(1/\phi)$, $\norm{\widetilde P}=O(1/\phi)$,
$\norm{W_j}\le3$, and
$\norm{\widetilde W_j-W_j}\le3\tau_E$, a telescoping expansion gives
\begin{equation}
\label{eq:Y-assembly-error}
    \norm{\widetilde{\overline Y}_j-\overline Y_j}
    =
    O\!\left(
        \frac{\tau_P}{\phi}
        +
        \frac{\tau_E}{\phi^2}
    \right).
\end{equation}

\paragraph{Approximating the partial sum.}
Let $K$ be the multiset union of the previously selected matchings and
put $X:=D_G^{-1/2}L_KD_G^{-1/2}$.  Then
\begin{equation}
\label{eq:S-representation}
    S
    =
    \frac{q}{R}PXP.
\end{equation}
Every edge of $K$ is a non-loop edge of $G$, and its multiplicity can be
stored with that edge of $G$.  Hence $L_Kx$ can be computed in $O(m)$
time.  Moreover, because $K$ is the union of $j-1$ matchings,
\[
    \norm X\le\frac{2\beta(j-1)}{\delta}.
\]
If $\widetilde S:=(q/R)\widetilde P X\widetilde P$, then
\[
    PXP-\widetilde P X\widetilde P
    =
    (P-\widetilde P)XP
    +
    \widetilde P X(P-\widetilde P).
\]
Using $q/R=\delta \phi^2/(4\beta)$ and
$\norm P,\norm{\widetilde P}=O(1/\phi)$ therefore gives
\begin{equation}
\label{eq:S-perturbation}
    \norm{S-\widetilde S}
    =
    O((j-1)\phi\tau_P).
\end{equation}

\paragraph{Dense implementation and accuracy.}
For a graph with $n$ vertices and $m$ edges, define
\begin{equation}
\label{eq:L-definition}
    L
    :=
    \max\left\{
        2,
        \log\!\left(\frac{64(n+m)^4}{\eps\phi}\right)
    \right\}.
\end{equation}
By Corollary~2.4 of~\cite{Sob25}, a real symmetric \(n\times n\)
matrix can be diagonalized deterministically to operator-norm error
\(\eta_{\mathrm{diag}}\) in
\(O(n^\omega\log n+n^2\log^{O(1)}(n/\eta_{\mathrm{diag}}))\)
time.

Write $A:=\cL|_{\cU}$ and apply this algorithm to $A$ using
$\eta_{\mathrm{diag}}=e^{-CL}$ for a sufficiently large absolute
constant $C$. We thus obtain $\widehat A$, an approximate diagonalized version. 
We have
\[
    \norm{\widehat A-A}\le\eta_{\mathrm{diag}}
    \le\frac{\phi^2}{4}.
\]
The spectrum of \(A\) lies in \([\phi^2/2,2]\) by
\cref{lem:cheeger}.  Since
\(\widehat A-A\succeq-\eta_{\mathrm{diag}}I\), we have
\[
    \widehat A
    \succeq
    A-\eta_{\mathrm{diag}}I
    \succeq
    \left(\frac{\phi^2}{2}-\eta_{\mathrm{diag}}\right)I
    \succeq
    \frac{\phi^2}{4}I.
\]
Thus \(\widehat A\) is positive definite on \(\cU\).
Define
\[
    P:=A^{-1/2},
    \qquad
    \widetilde P:=\widehat A^{-1/2}
\]
on $\cU$, extending both operators by zero on $\cU^\perp$.

For every positive definite matrix $M$,
\begin{equation}
\label{eq:inverse-square-root-integral}
    M^{-1/2}
    =
    \frac1\pi
    \int_0^\infty
        t^{-1/2}(M+tI)^{-1}\,dt.
\end{equation}
By the spectral theorem it suffices to verify this identity for
a scalar $\lambda>0$, in which case the substitution $t=\lambda s$
gives
\[
    \frac1\pi\int_0^\infty
        \frac{t^{-1/2}}{\lambda+t}\,dt
    =
    \frac{\lambda^{-1/2}}{\pi}
    \int_0^\infty\frac{s^{-1/2}}{1+s}\,ds
    =
    \lambda^{-1/2}.
\]

Applying \eqref{eq:inverse-square-root-integral} to $A$ and
$\widehat A$, and using the identity
\[
    (\widehat A+tI)^{-1}-(A+tI)^{-1}
    =
    (\widehat A+tI)^{-1}
    (A-\widehat A)
    (A+tI)^{-1},
\]
we obtain
\[
\begin{aligned}
    \norm{\widetilde P-P}
    &\le
    \frac{\eta_{\mathrm{diag}}}{\pi}
    \int_0^\infty
        t^{-1/2}
        \norm{(\widehat A+tI)^{-1}}
        \norm{(A+tI)^{-1}}
        \,dt\\
    &\le
    \frac{\eta_{\mathrm{diag}}}{\pi}
    \int_0^\infty
        \frac{t^{-1/2}}
             {(t+\phi^2/4)^2}
        \,dt.
\end{aligned}
\]

Here we use that for every $c>0$,
\[
    \int_0^\infty\frac{t^{-1/2}}{(t+c)^2}\,dt
    =
    \frac{\pi}{2c^{3/2}}.
\]

Taking $c=\phi^2/4$ therefore yields
\[
    \norm{\widetilde P-P}
    \le
    \frac{4\eta_{\mathrm{diag}}}{\phi^3}
    =
    O(\phi^{-3}\eta_{\mathrm{diag}}).
\]

Thus, by taking
$C$ sufficiently large, we obtain
\[
    \norm{\widetilde P-P}=e^{-\Theta(L)}.
\]

The dense routine is used below only on the degree-regularized instances,
for which \(R=O(\phi^{-2})\).  Hence
\[
    k=O(\phi^{-2}\eps^{-2}\log n)
    \qquad\text{and}\qquad
    \log\!\left(\frac{nk}{\eps\phi}\right)=O(L).
\]
Thus polynomial factors in \(k\), \(1/\eps\), and \(1/\phi\) can be
absorbed by increasing the constant \(C\).

Form $\widetilde S$ from \eqref{eq:S-representation} using
$\widetilde P$.  By \eqref{eq:S-perturbation},
$\norm{\widetilde S-S}=e^{-\Theta(L)}$.  Let \(\gamma_j\) be the
explicit upper bound furnished by \eqref{eq:S-perturbation}, multiplied
by \(\max\{\theta_+,\theta_-\}\).  Then
\[
    \gamma_j=e^{-\Theta(L)}
    \qquad\text{and}\qquad
    \gamma_j\ge
    \max\{\theta_+,\theta_-\}\norm{\widetilde S-S}.
\]
Replace \(S\) by \(\widetilde S\) in the restrictions
\(Q_{+,j},Q_{-,j}\), and shift each resulting matrix upward by
\(\gamma_jI_{\cU}\).  The shifted matrices are positive semidefinite and
differ from the corresponding exact matrices by
\(e^{-\Theta(L)}\).  Diagonalize their restrictions to \(\cU\), replace
negative approximate eigenvalues by zero, apply \(x\mapsto e^{-x}\), and
extend the resulting operators by zero on \(\cU^\perp\).  This produces
\(\widetilde U_j,\widetilde V_j\succeq0\) satisfying
\[
    \norm{\widetilde U_j-U_j},
    \norm{\widetilde V_j-V_j}
    =
    e^{-\Theta(L)}
\]
by \cref{lem:exp-perturbation}.  Scalar evaluation errors of
\(e^{-CL}\) are absorbed in the same bound.

Finally form
\[
    \widetilde{\overline Y}_j
    =
    \widetilde P
    \left(
        \rho_+^{-1}\widetilde U_j
        -
        \rho_-^{-1}\widetilde V_j
    \right)
    \widetilde P.
\]
By \eqref{eq:Y-assembly-error}, increasing $C$ if necessary gives
$\norm{\widetilde{\overline Y}_j-\overline Y_j}\le\tau_Y$.
Equations \eqref{eq:score-error-reduction} and \eqref{eq:tau-Y} then give
the accuracy required by \cref{lem:inexact-scores}.

The approximate diagonalizations, scalar evaluations, and \(O(1)\)
additional matrix multiplications compute
$\widetilde{\overline Y}_j$ in
\begin{equation}
\label{eq:dense-expander-time}
    O\!\left(m + n^\omega\log n+n^2L^{O(1)}\right)
\end{equation}
time.  The final edge scan costs $O(m)$ and is included in this bound.

\subsection{Polynomial Approximation}
The sparse implementation obtains the componentwise accuracies used in
\eqref{eq:Y-assembly-error} without diagonalization.  It uniformly
approximates the inverse square root and negative exponential on their
relevant spectral intervals and applies the resulting polynomials using
matrix--vector products.  A degree-\(d\) polynomial in a Laplacian can be
applied to a vector using \(d\) sparse matrix--vector products.

We prove the inverse-square-root approximation in
\cref{app:inverse-square-root}; the exponential approximation is
\cite[Theorem~4.1]{SV14}.

\begin{restatable}[Inverse-square-root approximation]
    {lemma}{inversesqrtapproximation}
\label{lem:inverse-sqrt}
For every \(0<a\le1\) and \(0<\tau<1/2\), one can compute a polynomial
\(p\) of degree
\[
    \deg p
    =
    O\!\left(
        a^{-1/2}\log\!\left(\frac{1}{a\tau}\right)
    \right)
\]
satisfying
\[
    \sup_{x\in[a,2]}|p(x)-x^{-1/2}|\le\tau.
\]
\end{restatable}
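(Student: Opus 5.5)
We build the polynomial by combining the integral representation \eqref{eq:inverse-square-root-integral} with Chebyshev approximation. We want degree about \(a^{-1/2}\log(1/(a\tau))\), which is the square-root-of-condition-number rate typical of Chebyshev methods.

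\textbf{Change of variables.} Map \([a,2]\) affinely onto \([-1,1]\) via \(x=\frac{2+a}{2}+\frac{2-a}{2}y\). The function \(f(y)=x(y)^{-1/2}\) is analytic except at the branch point \(x=0\). That point corresponds to \(y_0=-\frac{2+a}{2-a}\), which lies at distance \(\Theta(a)\) to the left of \(-1\). Hence \(f\) is analytic and bounded inside the Bernstein ellipse \(E_\rho\) for any \(\rho\) with \(\rho+\rho^{-1}<2|y_0|\). Since \(|y_0|=1+\Theta(a)\), we may take \(\rho=1+c\sqrt a\) for a small constant \(c\).

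\textbf{Bounding \(f\) on the ellipse.} Pick \(\rho\) slightly smaller so the ellipse keeps distance \(\Omega(a)\) from \(y_0\). Then \(|x|\gtrsim a\) on \(E_\rho\), and \(|f|\le M=O(a^{-1/2})\) there. Choosing the branch cut along the negative real \(x\)-axis, which the ellipse avoids, keeps \(f\) single-valued.

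\textbf{Applying the Chebyshev truncation bound.} The standard bound for functions analytic in \(E_\rho\) gives
\[
\sup_{[-1,1]}|f-p_d|\le \frac{2M\rho^{-d}}{\rho-1},
\]
where \(p_d\) is the degree-\(d\) Chebyshev truncation. We need \(\frac{2M\rho^{-d}}{\rho-1}\le\tau\), that is, \(\rho^{-d}\le\tau\cdot O(a^{1/2}\cdot a^{1/2})=O(a\tau)\). Since \(\log\rho=\Theta(\sqrt a)\), this forces \(d=O(a^{-1/2}\log(1/(a\tau)))\), exactly the claimed degree.

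\textbf{Computing \(p\).} The Chebyshev coefficients can be approximated numerically, for example by interpolating at Chebyshev nodes of degree \(2d\). Interpolation at Chebyshev nodes satisfies a similar error bound, at most twice the truncation bound. Alternatively, we can use the explicit Chebyshev interpolant and absorb the Lebesgue constant \(O(\log d)\) into the logarithm. Pulling back through the affine map gives a polynomial in \(x\) of the same degree.

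\textbf{Main obstacle.} The delicate step is the geometry of the Bernstein ellipse near \(y_0\). Its leftmost point is \(-(\rho+\rho^{-1})/2\approx-(1+(\rho-1)^2/2)\), so \(\rho-1=\Theta(\sqrt a)\) leaves a gap of order \(a\) to the singularity. That gap is what yields both \(M=O(a^{-1/2})\) and the rate \(\log\rho=\Theta(\sqrt a)\). Checking this with explicit constants is routine but must be done carefully.

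An alternative avoids complex analysis altogether. Approximate \(x^{-1/2}\) by a quadrature of the integral \eqref{eq:inverse-square-root-integral} with \(O(\log(1/(a\tau)))\) geometrically spaced nodes. Then approximate each resolvent \((x+t)^{-1}\) on \([a,2]\) with its own Chebyshev polynomial of degree \(O(\sqrt{1/(a+t)}\log(1/(a\tau)))\). This gives the same degree bound, but with one extra logarithm unless the errors are balanced carefully. I would try the direct Bernstein-ellipse argument first.
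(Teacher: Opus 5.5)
Your argument is correct, but it follows a different route from the paper.

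\textbf{Your route.} You approximate \(x^{-1/2}\) directly. After the affine map, the branch cut \(x\le0\) begins at distance \(\Theta(a)\) beyond the left endpoint. With \(\rho-1=\Theta(\sqrt a)\), the closed Bernstein ellipse has real part at least \(-(\rho+\rho^{-1})/2\), so it stays at distance \(\Omega(a)\) from the cut. This gives \(M=O(a^{-1/2})\), and the standard bound \(4M\rho^{-d}/(\rho-1)\) for the degree-\(d\) Chebyshev interpolant then yields the claimed degree.

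\textbf{The paper's route.} The paper avoids complex analysis. It writes \(x^{-1/2}=\pi^{-1}\int_0^\infty t^{-1/2}(x+t)^{-1}\,dt\) and truncates at \(T=16\tau^{-2}\), which costs error at most \(\tau/(2\pi)\). It then replaces each \((x+t)^{-1}\) by the Chebyshev reciprocal approximation on \([a+t,2+t]\) with relative accuracy \(\zeta=\tau^2/64\). That approximation has degree \(O(\sqrt{(2+t)/(a+t)}\log(1/((a+t)\zeta)))\le O(a^{-1/2}\log(1/(a\tau)))\), uniformly in \(t\). Finally it integrates the coefficients over \(t\).

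\textbf{On your alternative.} An integral of polynomials of degree at most \(D\) has degree at most \(D\), and there is no quadrature error, so the extra logarithm you feared for that route does not arise in this continuous version. Also, the per-shift degree scales with \(\sqrt{(2+t)/(a+t)}\), not \(\sqrt{1/(a+t)}\).

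\textbf{What each buys.} The paper reuses the reciprocal approximation it already cites, and the same integral identity it uses for the dense perturbation bound on \(\widetilde P-P\). Your version makes the ``one can compute'' claim cleaner. Your polynomial is the Chebyshev interpolant at \(d+1\) nodes, fixed by \(d+1\) evaluations of \(x^{-1/2}\), and it comes in the Chebyshev basis, which suits stable evaluation. The paper's coefficients are integrals over \(t\) that must be evaluated numerically, a step it only sketches.
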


\begin{lemma}[Approximation to the exponential]
\label{lem:exp-approx}
Whenever $0\preceq A\preceq RI$ and $0<\tau<1$, one can compute a polynomial $p$
of degree
\[
    \deg p
    =
    O\! \left(
        \sqrt{R\log\frac1\tau}
        +\log\frac1\tau
    \right)
\]
satisfying
\[
    e^{-A}-\tau I\preceq p(A)\preceq e^{-A}+\tau I.
\]
\end{lemma}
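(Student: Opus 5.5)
By the spectral theorem it suffices to produce a polynomial \(p\) with \(\sup_{x\in[0,R]}|p(x)-e^{-x}|\le\tau\). The eigenvalues of \(A\) lie in \([0,R]\), so applying the scalar bound to each eigenvalue gives \(e^{-A}-\tau I\preceq p(A)\preceq e^{-A}+\tau I\). I would follow the Sachdeva--Vishnoi strategy: shift to a symmetric interval, truncate the Taylor series, and then compress each monomial with Chebyshev polynomials.

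\textbf{Step 1: rescale and truncate.} Set \(\lambda:=R/2\) and substitute \(x=\lambda(1+y)\) with \(y\in[-1,1]\). Then
\[
e^{-x}=e^{-\lambda}e^{-\lambda y}=e^{-\lambda}\sum_{i\ge0}\frac{(-\lambda)^i y^i}{i!}.
\]
Truncate this sum at \(i\le t\). Since \(|y|\le1\), the truncation error is at most
\[
e^{-\lambda}\sum_{i>t}\frac{\lambda^i}{i!}.
\]
This is a Poisson\((\lambda)\) upper tail, so it is at most \(\tau/2\) once \(t\ge e^2\lambda+\ln(2/\tau)\). Indeed, the terms beyond \(e^2\lambda\) decay faster than a geometric series with ratio \(e^{-2}\), and \(t!\ge(t/e)^t\). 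So we may take \(t=O(R+\log(1/\tau))\). This truncated polynomial has degree \(t\), which is too large, so the next step lowers it.

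\textbf{Step 2: Chebyshev compression of the monomials.} Use the classical identity \(y^i=\mathbb E[T_{D_i}(y)]\). Here \(D_i\) is the position after \(i\) steps of a simple \(\pm1\) random walk started at \(0\), and \(T_{-d}:=T_d\). Define \(p_{i,d}(y):=\mathbb E[T_{D_i}(y)\mathbf 1_{|D_i|\le d}]\). This polynomial has degree at most \(d\). Since \(|T_\ell(y)|\le1\) on \([-1,1]\), Hoeffding's inequality gives
\[
\sup_{y\in[-1,1]}|y^i-p_{i,d}(y)|\le\Pr[|D_i|>d]\le 2e^{-d^2/(2i)}\le 2e^{-d^2/(2t)}\qquad(i\le t).
\]
Choose \(d:=\lceil\sqrt{2t\ln(4/\tau)}\rceil\), so that each monomial error is at most \(\tau/2\). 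Set
\[
p(x):=e^{-\lambda}\sum_{i\le t}\frac{(-\lambda)^i}{i!}\,p_{i,d}\bigl(x/\lambda-1\bigr).
\]
The replacement error is at most
\[
e^{-\lambda}\sum_{i}\frac{\lambda^i}{i!}\cdot\frac{\tau}{2}=\frac{\tau}{2}.
\]
Adding the truncation error from Step 1 gives total error at most \(\tau\). For the degree,
\[
\deg p\le d=O\!\left(\sqrt{(R+\log(1/\tau))\log(1/\tau)}\right)=O\!\left(\sqrt{R\log(1/\tau)}+\log(1/\tau)\right).
\]
The coefficients are explicit binomial sums, so \(p\) is computable. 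If \(R\) is tiny we can simply use \(\lambda=\max\{R,1\}/2\), because enlarging the interval only helps.

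\textbf{Main obstacle.} The key point is the tail calibration in Step 1 together with the fact that the Chebyshev replacement error is weighted by \(e^{-\lambda}\lambda^i/i!\) rather than by \(\lambda^i/i!\) alone. Without the prefactor \(e^{-\lambda}\), the errors would be amplified by \(e^{\lambda}\). Factoring out \(e^{-\lambda}\) is exactly what makes the absolute errors sum to \(O(\tau)\) with no blow-up in \(R\). This in turn is what yields the square-root dependence on \(R\).
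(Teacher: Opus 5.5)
Your proof is correct and is essentially the Sachdeva--Vishnoi argument behind \cite[Theorem~4.1]{SV14}, which the paper cites instead of proving: shift to $[-1,1]$, truncate the Taylor series with the factor $e^{-\lambda}$ pulled out, and compress each monomial using the random-walk Chebyshev identity and Hoeffding. The only nit is uniformity as $\tau\to1$, where your $\ln(4/\tau)$ is not $O(\ln(1/\tau))$; that regime is trivial, because for $\tau\ge1/2$ the constant polynomial $(1+e^{-R})/2$ already has error $(1-e^{-R})/2<\tau$.
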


The corresponding matrix guarantees for scalar approximation follow immediately by applying the
polynomials to the eigenvalues.

\subsection{Sparse Score Evaluation}
\label{sec:sparse-evaluation}

The sparse implementation follows the same score identity as the dense
one but constructs the score matrix one column at a time.  Each
column requires applications of polynomials in sparse matrices and of two matrix
exponentials, and each polynomial step is a matrix--vector
multiplication.

The analysis is inductive over the selection iterations.  At iteration
$j$, use the matrices $S,U_j,V_j,Q_{+,j},Q_{-,j}$ and the multiset $K$
defined in the preceding subsection.  The induction hypothesis gives
bounds on $U_j$ and $V_j$, while \eqref{eq:Q-range} bounds the
intervals on which their negative exponentials must be approximated.
It remains to obtain sufficiently accurate polynomial representations of
$P$ and these two exponentials.

Let
\begin{equation}
\label{eq:Lambda-definition}
    \Lambda
    :=
    \max\left\{
        2,
        \log\!\left(
            \frac{256n k}{\eps\phi}
        \right)
    \right\}.
\end{equation}
Let \(\Pi\) be the orthogonal projection onto
\(\cU=(D_G^{1/2}\1)^\perp\).  Set
\begin{equation}
\label{eq:tau-p}
    \tau_P:=e^{-\Omega(\Lambda)}
\end{equation}
with a sufficiently large absolute constant in the exponent.  By
\cref{lem:cheeger}, the spectrum of \(\cL|_{\cU}\) lies in
\([\phi^2/2,2]\).  Apply
\cref{lem:inverse-sqrt} with \(a=\phi^2/2\) and accuracy \(\tau_P\), and
let \(p_P\) be the resulting polynomial.  Since
\(\log(1/(\phi^2\tau_P))=O(\Lambda)\), its degree satisfies
\begin{equation}
\label{eq:dP}
    d_P:=\deg p_P=O(\phi^{-1}\Lambda).
\end{equation}
Define the polynomial operator
\begin{equation}
\label{eq:Ptilde-definition}
    \widetilde P:=\Pi p_P(\cL)\Pi.
\end{equation}
The operator \(P\) equals
\((\cL|_{\cU})^{-1/2}\) on \(\cU\) and is zero on its orthogonal
complement.  Hence the guarantee of
\cref{lem:inverse-sqrt} gives
\[
    \norm{P-\widetilde P}\le\tau_P.
\]
One application of \(\widetilde P\) costs
\(O(md_P)\) time: the projections cost \(O(n)\), and
each multiplication by \(\cL\) costs \(O(m)\).

Define \(\widetilde S\) by replacing \(P\) with \(\widetilde P\) in
\eqref{eq:S-representation}.  Equation \eqref{eq:S-perturbation} gives
\[
    \norm{S-\widetilde S}
    =
    O((j-1)\phi\tau_P).
\]
Because \(j-1\le k\), \(\theta_+,\theta_-\le1\), and
\(\Lambda\ge\log(256nk/(\eps\phi))\), choosing the constant in
\eqref{eq:tau-p} sufficiently large gives a bound
\(\widehat\gamma_j\) satisfying
\[
    \widehat\gamma_j
    \ge
    \max\{\theta_+,\theta_-\}
    \norm{S-\widetilde S},
    \qquad
    \widehat\gamma_j=e^{-\Omega(\Lambda)}.
\]

Form \(Q_{+,j}^{\circ}\) and \(Q_{-,j}^{\circ}\) using
\(\widetilde S\), and add \(\widehat\gamma_j\Pi\) to each.  Denote the
resulting matrices by
\(\widehat Q_{+,j}^{\circ}\) and
\(\widehat Q_{-,j}^{\circ}\).  They are positive semidefinite and differ
from their exact counterparts by \(e^{-\Omega(\Lambda)}\).  After an adjustment of constants,
\[
    0\preceq
    \widehat Q_{\pm,j}^{\circ}
    \preceq TI,
    \qquad
    T:=\frac{5\eps k}{R}.
\]
By \cref{lem:exp-perturbation}, their projected negative exponentials
differ from \(U_j\) and \(V_j\) by \(e^{-\Omega(\Lambda)}\).
Applying \cref{lem:exp-approx} and projecting on both sides gives
polynomial operators
\[
    \widetilde U_j
    :=
    \Pi p_{+,j}(\widehat Q_{+,j}^{\circ})\Pi,
    \qquad
    \widetilde V_j
    :=
    \Pi p_{-,j}(\widehat Q_{-,j}^{\circ})\Pi
\]
with operator error \(e^{-C\Lambda}\) and degree
\begin{equation}
\label{eq:dE}
\begin{aligned}
    d_E
    =
    O\!\left(
        \sqrt{\frac{\eps k}{R}\Lambda}
        +\Lambda
    \right)
    =
    O\!\left(
        \sqrt{\eps^{-1}\log(n)\Lambda}
        +\Lambda
    \right).
\end{aligned}
\end{equation}
Each evaluation requires \(d_E\) applications of \(\widetilde S\).
By \eqref{eq:S-representation}, one such application uses two
applications of \(\widetilde P\) and one application of \(L_K\), and
therefore costs \(O(md_P)\).  Thus an exponential application costs
\(O(md_Pd_E)\).

\cref{alg:sparse-score-evaluation} summarizes one iteration.

\begin{algorithm}[H]
\caption{\(\textsc{SparseScores}\) for iteration \(j\)}
\label{alg:sparse-score-evaluation}
\begin{algorithmic}[1]
\Require \(G\), the matchings \(M_1,\ldots,M_q\), and the multiplicity
representation of
\(K=M_{i_1}\mathbin{\dot\cup}\cdots\mathbin{\dot\cup}M_{i_{j-1}}\)
\Ensure An approximate score \(\widetilde\sigma_i\) for every \(i\in[q]\)
\State Construct \(p_P\) using
\cref{lem:inverse-sqrt} with
\(a=\phi^2/2\) and \(\tau_P=e^{-\Omega(\Lambda)}\)
\State Set
\(\widetilde P\gets\Pi p_P(\cL)\Pi\)
as in \eqref{eq:Ptilde-definition}
\State Represent \(\widetilde S\) by replacing \(P\) with
\(\widetilde P\) in \eqref{eq:S-representation}
\State Construct the degree-\(d_E\) polynomial operators
\(\widetilde U_j\) and \(\widetilde V_j\) from \eqref{eq:Qpm} and
\eqref{eq:dE},
using \(\widetilde S\), the upward shift, and the projections onto
\(\cU\) described above
\For{\(v\in V(G)\)}
    \State Set column \(v\) of \(\widetilde{\overline Y}_j\) to
    \(\widetilde P
      (\rho_+^{-1}\widetilde U_j-\rho_-^{-1}\widetilde V_j)
      \widetilde P\1_v\)
\EndFor
\State Initialize \(\widetilde\sigma_i\gets0\) for every \(i\in[q]\)
\ForAll{\(\{u,v\}\in M_i\), over all \(i\in[q]\)}
    \State Add
    \(qw(uv)\bigl(
      (\widetilde{\overline Y}_j)_{uu}/\deg_G(u)
      +(\widetilde{\overline Y}_j)_{vv}/\deg_G(v)
      -2(\widetilde{\overline Y}_j)_{uv}/
       \sqrt{\deg_G(u)\deg_G(v)}
    \bigr)\)
    to \(\widetilde\sigma_i\)
\EndFor
\State \Return \(\widetilde\sigma_1,\ldots,\widetilde\sigma_q\)
\end{algorithmic}
\end{algorithm}

For the analysis, define the exact matrix
\[
    \overline Y_j
    :=
    P\left(\rho_+^{-1}U_j-\rho_-^{-1}V_j\right)P.
\]
\cref{alg:sparse-score-evaluation} explicitly materializes the
approximation \(\widetilde{\overline Y}_j\), one column at a time.  Each
column is obtained by applying \(\widetilde P\), the two polynomial
exponential approximations, and then \(\widetilde P\) again.  Forming all
\(n\) columns therefore costs
\begin{equation}
\label{eq:one-sparse-iteration}
    O\!\left(
        nm\phi^{-1}\Lambda
        \left[
            \sqrt{\eps^{-1}\log(n)\Lambda}
            +\Lambda
        \right]
    \right)
\end{equation}
time and \(O(n^2)\) space.

Once \(\widetilde{\overline Y}_j\) has been materialized, initialize the
\(q\) scores in \(O(q)\) time and scan every edge once.  Since the
matchings partition the edge set, the scan costs \(O(m+q)=O(m)\).  This is absorbed by
\eqref{eq:one-sparse-iteration}.  For the exact matrices, the identity
underlying these updates is
\begin{equation}
\label{eq:sparse-score-edge-scan}
\begin{aligned}
    \ip{W_j}{A_i}
    =q\sum_{\{u,v\}\in M_i}
    w(uv)\biggl(
        \frac{(\overline Y_j)_{uu}}{\deg_G(u)}
        +\frac{(\overline Y_j)_{vv}}{\deg_G(v)}
        -\frac{2(\overline Y_j)_{uv}}
                    {\sqrt{\deg_G(u)\deg_G(v)}}
    \biggr).
\end{aligned}
\end{equation}

It remains to verify the precision.  The construction gives
\[
    \norm{\widetilde P-P}=e^{-\Omega(\Lambda)}
\]
and
\[
    \norm{\widetilde U_j-U_j},
    \norm{\widetilde V_j-V_j}
    =
    e^{-C\Lambda}.
\]
Consequently, \eqref{eq:Y-assembly-error} yields
\[
    \norm{\widetilde{\overline Y}_j-\overline Y_j}
    =
    e^{-\Omega(\Lambda)}\phi^{-2}.
\]
Since
\(\Lambda\ge\log(256nk/(\eps\phi))\), the constants in the approximation
accuracies can be chosen so that this is at most \(\tau_Y\) from
\eqref{eq:tau-Y}.  The common reduction
\eqref{eq:score-error-reduction} then gives the accuracy required by
\cref{lem:inexact-scores}.

The per-iteration score scan costs \(O(m)\), and maintaining \(K\)
throughout the algorithm costs \(O(kn)\).  Both are dominated by the
matrix-materialization work.  Multiplying
\eqref{eq:one-sparse-iteration} by the
\(k=O(R\eps^{-2}\log(n))\) iterations therefore gives
\begin{equation}
\label{eq:sparse-expander-time}
O\!\left(
    nm\phi^{-1}R\eps^{-2}\log(n)\Lambda
    \left[
        \sqrt{\eps^{-1}\log(n)\Lambda}
        +\Lambda
    \right]
\right).
\end{equation}
time.

\subsection{Degree Regularization and Matching Decomposition}
The eigenvalue bound in \eqref{eq:R-general} is useful only when the number of
matchings is comparable to the minimum weighted degree.  A graph may fail
this condition because a high-degree vertex forces many matchings even
when other vertices have small degree.  We remove this obstruction by
splitting each vertex into a collection of vertices, called a fiber, of approximately equal-load copies.
An auxiliary expander inside each fiber prevents a low-conductance cut
from separating the copies, and self-loops equalize their weighted
degrees without changing the Laplacian.  Contracting the fibers after
sparsification recovers the original graph.

\begin{lemma}[Degree regularization]
\label{lem:degree-regularization}
Let $G$ be a weighted multigraph, possibly with self-loops, such that
every non-loop edge has weight in $[1,2]$, $\Phi(G)\ge\phi$, and
$\delta = \Omega(d)$ for some $d \ge 1$.  In
$O((m+\vol_G(V))\log n)$ time one can construct a weighted multigraph
$G^\circ$ and a partition of $V(G^\circ)$ into fibers
$(F_v)_{v\in V(G)}$ with the following properties:
\begin{enumerate}[label=(\roman*)]
\item $|V(G^\circ)|=O(n+\vol_G(V)/d)$ and
$|E(G^\circ)|=O(m+\vol_G(V))$;
\item every non-loop edge of \(G^\circ\) has weight in \([1,2]\), and
every \(x\in V(G^\circ)\) satisfies
\[
    cd\le\deg_{G^\circ}(x)\le Cd,
\]
for absolute constants $c,C>0$.  In particular,
$\deg_{G^\circ}^\#(x)\le\deg_{G^\circ}(x)=O(d)$;
\item $\Phi(G^\circ)=\Omega(\phi)$;
\item contracting every fiber maps a spectral sparsifier of $G^\circ$ to
a spectral sparsifier of $G$ with the same approximation factor.
\end{enumerate}
\end{lemma}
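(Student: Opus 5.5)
Fix a constant-degree expander family: for every $t\ge1$, a deterministic $3$-regular (or $O(1)$-regular) graph $X_t$ on $t$ vertices with $\Phi(X_t)\ge c_0$, constructible in $O(t)$ time (e.g.\ Margulis--Gabber--Galil, padded for arbitrary $t$). For each $v\in V(G)$ let $t_v:=\lceil\deg^\#_G(v)/d\rceil$. Distribute the non-loop edge endpoints at $v$ among $t_v$ copies in round-robin fashion, so each copy receives at most $d$ (and, when $t_v\ge2$, at least $d/2$) endpoints. Each non-loop edge $uv$ of $G$ becomes one edge between the assigned copies, with the same weight in $[1,2]$. Inside $F_v$ place a copy of $X_{t_v}$ with every edge of weight $1$. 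Finally add self-loops so every copy has weighted degree in $[cd,Cd]$. A copy has at most $d$ endpoints of weight $\le2$, plus $O(1)$ expander edges, so its degree is $O(d)$. We pad up to $Cd$ with a loop, and put the self-loop weight of $v$ on copies proportionally. Since $\delta_G=\Omega(d)$ and $t_v\le\deg^\#_G(v)/d+1$, the weighted degree of $v$ divided by $t_v$ is $\Omega(d)$, so padding never needs to exceed $O(d)$ per copy. The loop weight may exceed $2$; the statement only bounds non-loop weights. Because $\vol_G(V)\ge\sum_v\deg^\#_G(v)$, items (i) and (ii) follow by counting. The $\log n$ factor in the time bound covers the expander construction and bookkeeping.

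For (iv), let $\Pi$ be the $V(G^\circ)\times V(G)$ fiber-indicator matrix. Then $\Pi^\top L_{G^\circ}\Pi=L_G$, because expander edges inside fibers and self-loops vanish and cross edges map to the original edges. Likewise $\Pi^\top L_{H^\circ}\Pi=L_H$ for the contraction $H$ of $H^\circ$. Conjugating $(1-\eps)L_{G^\circ}\preceq L_{H^\circ}\preceq(1+\eps)L_{G^\circ}$ by $\Pi$ gives the claim. Intra-fiber edges of $H^\circ$ become self-loops, which have zero Laplacian and can be dropped.

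The main obstacle is (iii). Take $S\subseteq V(G^\circ)$ with $\vol_{G^\circ}(S)\le\vol_{G^\circ}(V^\circ)/2$. For each $v$, let $s_v:=|S\cap F_v|/t_v$ be the fraction of the fiber captured by $S$.

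\textbf{Case 1: most of the volume of $S$ lies in fibers with $s_v\le1/2$.} The expander $X_{t_v}$ contributes at least $c_0\min\{|S\cap F_v|,|F_v\setminus S|\}=\Omega(|S\cap F_v|)$ cut edges, each of weight $1$. Against copy volume $O(d)$, this only gives conductance $\Omega(1/d)$, which is too weak.

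Fix: give the intra-fiber expander edges weight $\Theta(d)$. This requires reading (ii) as constraining only $G$'s edges, or splitting each heavy edge into $\Theta(d)$ parallel weight-$1$ edges, which adds $O(t_vd)=O(\deg^\#_G(v)+d)$ edges. That is within the budget $O(m+\vol_G(V))$, since $d\le O(\delta_G)\le\deg_G(v)$. With this choice, a fiber cut costs $\Omega(d)$ per copy on the smaller side. Hence $w(E(S,\bar S))\ge\Omega(1)\cdot\sum_v\min\{\vol(S\cap F_v),\vol(F_v\setminus S)\}$, and Case 1 gives conductance $\Omega(1)\ge\Omega(\phi)$.

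\textbf{Case 2: the volume of $S$ is dominated by fibers with $s_v>1/2$.} Round $S$ to $T:=\{v:s_v>1/2\}\subseteq V(G)$. Fiber volumes are $\Theta(\deg_G(v))$ after padding, so $\vol_G(T)=\Theta(\vol_{G^\circ}(S))$, and similarly for the complement up to constants. Conductance of $G$ then gives $w(E_G(T,\bar T))\ge\phi\min\{\vol_G(T),\vol_G(\bar T)\}$. Each edge of $E_G(T,\bar T)$ either is cut by $S$ in $G^\circ$ or has an endpoint copy on the minority side of its fiber. The latter edges number at most $d$ per minority copy, and they are paid for by the $\Omega(d)$ fiber-cut edges at that copy.

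Combining the cases gives $\Phi(G^\circ)=\Omega(\phi)$. The delicate points are (a) the complement-side volume comparison after rounding, and (b) reconciling the $\Theta(d)$ expander weight with item (ii). I would try the parallel-edge splitting first for (b).
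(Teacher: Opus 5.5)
Your construction (once you adopt the parallel-edge fix, which is exactly the paper's choice of $a\lceil d\rceil$ unit-weight copies of each fiber-expander edge), your contraction argument for (iv), and your charging idea in (iii) all match the paper. The gap is the step you flagged as (a). In Case 2 you assert that $\vol_G(V(G)\setminus T)$ is ``similarly'' at least a constant times $b:=\vol_{G^\circ}(S)$. Your case hypothesis does not give this. It only controls how much of $S$ lies in fibers where $S$ is the minority. It says nothing about how much of $V(G^\circ)\setminus S$ lies inside the majority fibers $\bigcup_{v\in T}F_v$.

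Concretely, let $S$ be a bare majority of almost every fiber and avoid a few fibers of small total volume. This is compatible with $b\le\vol_{G^\circ}(V(G^\circ))/2$ once fibers are large, e.g.\ when $d$ is much smaller than typical degrees. Then $V(G)\setminus T$ carries only a tiny fraction of the volume, so $\phi\min\{\vol_G(T),\vol_G(V(G)\setminus T)\}$ is far below $\phi b$, and your Case 2 bound is too weak. The guaranteed boundary there comes from the fiber expanders at the copies in $F_v\setminus S$ with $v\in T$, which your volume comparison never uses.

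The fix is to split on the minority volume on \emph{both} sides, $A:=\lceil d\rceil\sum_v\min\{|S\cap F_v|,|F_v\setminus S|\}$, rather than on the minority part of $S$ alone. If $A\ge c_1 b$ for a small constant $c_1$, the fiber expanders alone give boundary $\Omega(A)=\Omega(b)$. Otherwise the majority rounding $\widehat S=\bigcup_{v\in T}F_v$ satisfies $\vol_{G^\circ}(S\triangle\widehat S)=O(A)\le b/4$. Hence both $\widehat S$ and its complement have volume at least $3b/4$, which yields $\min\{\vol_G(T),\vol_G(V(G)\setminus T)\}=\Omega(b)$. Your bound $w(E_{G^\circ}(S,V(G^\circ)\setminus S))\ge\max\{w(E_G(T,V(G)\setminus T))-O(A),\,\Omega(A)\}$ then finishes, as in the paper.

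A smaller problem concerns the self-loops of $G$. You size fibers by $\deg^\#_G(v)$, so spreading $v$'s loop weight over its $t_v$ copies can push a copy well above $Cd$ when loops dominate $\deg_G(v)$. This is exactly what happens in the cluster graphs $F[U]\langle V_i\rangle$ to which the lemma is later applied. If you instead discard the loops and pad each copy to $Cd$, your claim $\vol_{G^\circ}(F_v)=\Theta(\deg_G(v))$ becomes false. You have two ways out. One is to size fibers by weighted degree, $t_v=\max\{1,\lceil\deg_G(v)/\lceil d\rceil\rceil\}$, and replace $G$'s loops by padding, as the paper does. The other is to keep your $t_v$ and pad only. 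The rounding step needs just $\vol_{G^\circ}(F_v)=O(\deg_G(v))$, which holds because $t_v d\le\deg^\#_G(v)+d=O(\deg_G(v))$.
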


\begin{algorithm}[H]
\caption{\(\textsc{RegularizeDegrees}(G,d)\)}
\label{alg:degree-regularization}
\begin{algorithmic}[1]
\Require A graph \(G\) satisfying the hypotheses of
\cref{lem:degree-regularization}
\Ensure A graph \(G^\circ\) and fibers \((F_v)_{v\in V(G)}\)
\State \(s\gets\lceil d\rceil\)
\For{\(v\in V(G)\)}
    \State \(t_v\gets\max\{1,\lceil\deg_G(v)/s\rceil\}\)
    \State Create a fiber \(F_v\) of \(t_v\) vertices
    \If{\(t_v\ge2\)}
        \State Produce a constant-degree expander with constant expansion to be placed on \(F_v\)
    \EndIf
    \State Distribute the non-loop incidences at \(v\) among \(F_v\) so
    that every copy receives total weight \(O(s)\)
\EndFor
\State Initialize \(G^\circ\) on the vertex set
\(\mathbin{\dot\bigcup}_{v\in V(G)}F_v\)
\ForAll{non-loop edges \(e=uv\in E(G)\)}
    \State Add an edge of weight \(w(e)\) between the copies assigned the
    two incidences of \(e\)
\EndFor
\ForAll{fibers \(F_v\) with \(|F_v|\ge2\)}
    \State Add \(as\) parallel unit-weight copies of every edge of the
    fiber expander, for a sufficiently large absolute constant \(a\)
\EndFor
\For{\(x\in V(G^\circ)\)}
    \State Add a self-loop that raises \(\deg_{G^\circ}(x)\) to \(C_0s\),
    for a sufficiently large absolute constant \(C_0\)
\EndFor
\State \Return \(G^\circ,(F_v)_{v\in V(G)}\)
\end{algorithmic}
\end{algorithm}

\begin{proof}
Run \cref{alg:degree-regularization}.  Since \(d\ge1\), the parameter
\(s=\lceil d\rceil\) satisfies \(s=\Theta(d)\).

\paragraph{Size and degrees.}
The balanced incidence assignment exists by greedily placing the next
incidence on the least-loaded copy.  Since every incidence has weight at
most \(2\), the load of every copy is at most the average load plus \(2\),
and is therefore \(O(s)\).  Use an explicit constant-degree,
constant-conductance expander family for the fibers, constructible in
linear time~\cite[Theorem~2.4]{CKLPGS22}.

Every copy is incident to original and fiber edges of total weight
$O(s)$ before the self-loops are added, so the padding is nonnegative for
a suitable $C_0$.
Moreover, the lower bound on the input degrees gives
\[
    \sum_vt_v
    =O\!\left(n+\frac{\vol_G(V)}d\right)
    \quad\text{and}\quad
    dn=O(\vol_G(V)).
\]
It follows that the number of vertices and the total number of edge
copies satisfy (i): the auxiliary expanders contribute
\(O(s\sum_vt_v)=O(\vol_G(V))\) copies, and the padding contributes one
self-loop per new vertex.  The original non-loop edges retain their
weights, while the auxiliary non-loop edges have unit weight.  The
construction of the fibers and the degree padding therefore gives (ii).
We also have
\begin{equation}
\label{eq:fiber-volume}
    \vol_{G^\circ}(F_v)=\Theta(\deg_G(v))
    \qquad\text{for every }v\in V(G).
\end{equation}

\paragraph{Conductance.}
Fix
$S\subseteq V(G^\circ)$ with
$\vol_{G^\circ}(S)\le\vol_{G^\circ}(V(G^\circ))/2$.  For every $v$, let
$a_v:=\min\{|S\cap F_v|,|F_v\setminus S|\}$ and put
$A:=s\sum_va_v$.  The fiber expanders contribute boundary weight
$\Omega(A)$ to $E_{G^\circ}(S,V(G^\circ)\setminus S)$.

Round each fiber to its majority side, let $\widehat S$ be the resulting
union of fibers, and define
$T:=\{v:F_v\subseteq\widehat S\}$.  Since every vertex of $G^\circ$ has
degree $\Theta(s)$,
\[
    \vol_{G^\circ}(S\triangle\widehat S)=O(A).
\]
Every original edge of $E_G(T,V(G)\setminus T)$ that does not cross $S$
has an endpoint copy in $S\triangle\widehat S$.  Each copy is incident to
original edges of total weight $O(s)$, and hence
\begin{equation}
\label{eq:regularization-rounding-cut}
    w(E_{G^\circ}(S,V(G^\circ)\setminus S))
    \ge w(E_G(T,V(G)\setminus T))-O(A).
\end{equation}
By \eqref{eq:fiber-volume}, for every $X\subseteq V(G)$,
\begin{equation}
\label{eq:regularization-rounding-volume}
    \vol_{G^\circ}\!\left(\bigcup_{v\in X}F_v\right)
    =\Theta(\vol_G(X)).
\end{equation}

Write $b:=\vol_{G^\circ}(S)$.  If $A\ge c_1\phi b$ for a sufficiently
small absolute constant $c_1$, the fiber edges already give boundary
$\Omega(\phi b)$.  Otherwise
$\vol_{G^\circ}(S\triangle\widehat S)\le b/4$.  Since both $S$ and its
complement have volume at least $b$, both $\widehat S$ and its complement
have volume at least $3b/4$.  Equation
\eqref{eq:regularization-rounding-volume} therefore gives
\[
    \min\{\vol_G(T),\vol_G(V(G)\setminus T)\}=\Omega(b).
\]
Using $\Phi(G)\ge\phi$ in
\eqref{eq:regularization-rounding-cut}, and then taking $c_1$ sufficiently
small, yields
\[
    w(E_{G^\circ}(S,V(G^\circ)\setminus S))
    \ge\Omega(\phi b)-O(A)
    =\Omega(\phi b).
\]
This proves (iii).

\paragraph{Contraction.}
Let $Q:\R^{V(G)}\to\R^{V(G^\circ)}$ copy the coordinate of $v$
to every vertex of $F_v$.  Fiber edges and self-loops have zero energy on
$\operatorname{im}Q$, while the distributed original edges have exactly
their original energy.  Thus $Q^TL_{G^\circ}Q=L_G$.  Conjugating a
spectral-approximation inequality for $G^\circ$ by $Q$ proves (iv).
\end{proof}

\begin{lemma}[Matching decomposition]
\label{lem:matching-decomposition}
The non-loop edges of a multigraph can be partitioned deterministically into $O(\Delta^\#)$
matchings in $O(m\log(2\Delta^\#))$ time.
\end{lemma}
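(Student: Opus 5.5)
We will prove the statement with a divide-and-conquer edge coloring based on Euler partitions, in the style of Gabow's algorithm. Only $O(\Delta^\#)$ colors are required, not $\Delta^\#$ or $\Delta^\#+1$, so we may afford a constant-factor loss. Self-loops are discarded at the start.

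\textbf{Step 1: reduction to a power of two.} Let $D$ be the smallest power of two with $D\ge\Delta^\#$, so $D<2\Delta^\#$. No regularization is needed. We only need the invariant that a multigraph of maximum combinatorial degree at most $D$ splits into two edge-disjoint subgraphs, each of maximum degree at most $\lceil D/2\rceil$.

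\textbf{Step 2: the Euler split.} For a multigraph with maximum degree at most $D$, we proceed as follows.
\begin{enumerate}[label=(\alph*)]
\item Pair up the odd-degree vertices by adding a dummy vertex $z$ adjacent to each of them, so that every degree becomes even.
\item Compute an Euler tour of each connected component in linear time.
\item Walk each tour and assign its edges alternately to the two halves.
\end{enumerate}
At an original vertex, consecutive tour edges through it are entered and left, so each visit contributes one edge to each half. The only imbalance comes from the single dummy edge, or from the start/end of the closed tour. This gives degree at most $\lceil \deg(v)/2\rceil$ in each half, and the work is $O(m)$. Recursing to depth $\log_2 D$ starts at degree bound $D$ and halves it each level, rounding up. Because $D$ is a power of two, $\lceil D/2\rceil=D/2$, so after $\log_2 D$ levels every part has maximum degree at most $1$, i.e.\ is a matching.

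\textbf{Main obstacle: degree-bound rounding.} I expect the delicate point to be the rounding of degree bounds. Writing $\lceil\deg/2\rceil$ at each level, one must check that no vertex exceeds $D/2^{\ell}$ at level $\ell$. This holds because $\deg\le D/2^{\ell}$, which is an even integer for $\ell<\log_2 D$, implies $\lceil\deg/2\rceil\le D/2^{\ell+1}$.

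The closed-tour start vertex needs separate care. If the tour has odd length, the start vertex may get both its first and last edge in the same half. This is handled by starting each tour at the dummy vertex $z$, whose degree balance we do not care about, or at an odd-degree-origin vertex. If a component contains no dummy edge, every degree is even and the tour has even length only if the edge count is even. Otherwise, the start vertex $v$ gets at most $\deg(v)/2+1$ in one half. Since $\deg(v)$ is even and $\le D/2^\ell$, and that bound is even, we have $\deg(v)\le D/2^\ell$ with $\deg(v)/2+1\le D/2^{\ell+1}$, except when $\deg(v)=D/2^\ell$. In that case we would instead start the tour at a vertex of non-maximal degree or accept a bound of $D/2^{\ell+1}+1$.

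If this case analysis becomes messy, the fallback is to absorb the $+1$ losses. The recursion $D_{\ell+1}\le D_\ell/2+1$ gives $D_\ell\le D/2^\ell+2$. So we stop at $\ell=\log_2 D$ with degree at most $3$. A final greedy step colors a degree-$3$ multigraph with at most $5$ matchings in linear time, which still yields $O(\Delta^\#)$ matchings in total.

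\textbf{Time bound.} Each recursion level touches every edge $O(1)$ times. The number of dummy edges is at most the number of vertices of positive degree, which is $O(m)$. There are $O(\log D)=O(\log(2\Delta^\#))$ levels, so the total is $O(m\log(2\Delta^\#))$. The number of matchings is $D$ times a constant, i.e.\ $O(\Delta^\#)$.
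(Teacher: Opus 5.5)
Your fallback argument is correct and is essentially the paper's proof. The paper also alternates colors along Euler tours after joining a dummy vertex to the odd-degree vertices, and it accepts an additive loss per round ($D_{h+1}\le D_h/2+2$). It stops once the maximum degree is at most $4$, greedily colors each of the $O(\Delta^\#)$ pieces with $7$ colors, and charges $O(m)$ per round. Starting every tour at $z$ improves your slack to $+1$; in fact $D_\ell\le 2+(D-2)/2^\ell<3$ at $\ell=\log_2 D$. This only changes constants.

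Your primary line, however, is false and should not be presented as the proof. The Step~1 invariant, and the conclusion that every part is a matching after $\log_2 D$ levels, would give a $D$-edge-coloring of every multigraph of maximum degree at most $D$. That fails already for a triangle ($D=2$, three matchings needed). It also fails for the triangle with every edge doubled ($D=4$, six matchings needed, since any two of its edges intersect).

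Concretely, your repair of starting at a vertex of non-maximal degree breaks exactly when a dummy-free component is $D_\ell$-regular with an odd number of edges. For $D_\ell\ge 4$ a power of two this cannot happen, because the edge count $|V(C)|D_\ell/2$ is even. So the exact halving does hold down to maximum degree $2$. At $D_\ell=2$, however, the components can be odd cycles. Corrected, the primary line yields $D/2$ pieces of maximum degree at most $2$, and hence $3D/2$ matchings. As written, only your fallback proves the lemma.
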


\begin{algorithm}[H]
\caption{\(\textsc{MatchingDecomposition}(G)\)}
\label{alg:matching-decomposition}
\begin{algorithmic}[1]
\Require A multigraph \(G\) of maximum non-loop combinatorial degree \(\Delta^\#\)
\Ensure A partition of \(E^\times(G)\) into \(O(\Delta^\#)\) matchings
\State \(\mathcal H\gets\{G\}\)
\While{some \(Q\in\mathcal H\) has maximum combinatorial degree greater than \(4\)}
    \State \(\mathcal H'\gets\emptyset\)
    \For{\(Q\in\mathcal H\)}
        \If{\(\Delta^\#(Q)\le4\)}
            \State Add \(Q\) to \(\mathcal H'\)
        \Else
            \ForAll{connected components \(C\) of \(Q\)}
                \State Add a dummy vertex adjacent to every odd-degree
                vertex of \(C\)
                \State Traverse an Euler tour and color its edges
                alternately red and blue
                \State Delete the dummy edges
            \EndFor
            \State Add the red and blue spanning subgraphs of \(Q\) to
            \(\mathcal H'\)
        \EndIf
    \EndFor
    \State \(\mathcal H\gets\mathcal H'\)
\EndWhile
\State \(\mathcal M\gets\emptyset\)
\For{\(Q\in\mathcal H\)}
    \State Greedily edge-color \(Q\) with at most seven colors
    \State Add each nonempty color class to \(\mathcal M\)
\EndFor
\State \Return \(\mathcal M\)
\end{algorithmic}
\end{algorithm}

\begin{proof}
Consider one split performed by
\cref{alg:matching-decomposition} on a graph of maximum combinatorial degree
$D$.  After the dummy edges are added, every degree is even.  Alternating
colors along an Euler tour makes the red and blue degrees at each original
vertex differ by at most two, and deleting the dummy edges preserves
\[
    \Delta^\#(H_{\mathrm{red}}),\Delta^\#(H_{\mathrm{blue}})
    \le D/2+2.
\]
If \(D_h\) is the largest maximum degree after \(h\) splitting rounds,
then
\[
    D_{h+1}\le\frac{D_h}{2}+2,
    \qquad
    D_h\le4+\frac{\Delta^\#-4}{2^h}.
\]
For \(h=\lceil\log_2(2\Delta^\#)\rceil\), the right-hand side is less
than \(5\); since \(D_h\) is an integer, every remaining graph has maximum
degree at most \(4\).  There are at most \(2^h=O(\Delta^\#)\) such graphs.
An edge in one of them is adjacent to at most six other edges, so greedy
edge coloring uses at most seven colors.  Each color class is a matching,
giving $O(\Delta^\#)$ matchings in total.  Every original edge is processed
$O(1)$ times per splitting round, which proves the
$O(m\log(2\Delta^\#))$ running time.
\end{proof}

\begin{algorithm}[H]
\caption{\(\textsc{ExpanderSparsify}(G,d,\phi,\eps,\mathtt{mode})\)}
\label{alg:expander-sparsify}
\begin{algorithmic}[1]
\Require A graph \(G\) whose non-loop weights lie in \([1,2]\), with
\(d\ge1\), \(\Phi(G)\ge\phi\), and \(\delta = \Omega(d)\), and
\(\mathtt{mode}\in\{\mathtt{dense},\mathtt{sparse}\}\)
\Ensure A \((1\pm\eps)\)-spectral sparsifier of \(G\)
\State \((G^\circ,(F_v)_v)\gets\Call{RegularizeDegrees}{G,d}\)
\State \((M_1,\ldots,M_q)\gets
\Call{MatchingDecomposition}{G^\circ}\)
\State Let \(\phi^\circ=c\phi\), where \(c>0\) is the absolute constant
from \cref{lem:degree-regularization}
\State \(H^\circ\gets
\Call{MatchingSparsifier}
{G^\circ,M_1,\ldots,M_q,2,\phi^\circ,\eps}\), evaluating scores by
\(\mathtt{mode}\)
\State Contract every fiber \(F_v\) in \(H^\circ\) and delete self-loops
\State \Return the resulting graph \(H\)
\end{algorithmic}
\end{algorithm}

\begin{lemma}[Expander sparsification]
\label{lem:expander-sparsification}
Let $G$ be a weighted multigraph whose non-loop
edge weights lie in $[1,2]$.  Suppose $\Phi(G)\ge\phi$ and
$\delta = \Omega(d)$ for some $d\ge1$.  Put
$n_0:=n+\vol_G(V)/d$ and $m_0:=m+\vol_G(V)$.
For $0<\eps\le1/2$, define \(L\) by \eqref{eq:L-definition} using
ambient upper bounds on the vertex and edge counts.  Then one can
construct a $(1\pm\eps)$-spectral sparsifier
with
\begin{equation}
\label{eq:expander-cor-size}
    O\!\left(n_0\phi^{-2}\eps^{-2}L\right)
\end{equation}
edges.  The dense implementation takes
\begin{equation}
\label{eq:expander-cor-dense}
    O\!\left(
        m_0 +
        (n_0^\omega\log n_0+n_0^2L^{O(1)}+m_0)
        \phi^{-2}\eps^{-2}L
    \right)
\end{equation}
time, while the sparse implementation takes
\begin{equation}
\label{eq:expander-cor-sparse}
    O\!\left(
        m_0+n_0m_0\phi^{-3}\eps^{-5/2}L^3
    \right).
\end{equation}
time.
\end{lemma}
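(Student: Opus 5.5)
The plan is to run \cref{alg:expander-sparsify} and verify correctness, size, and running time by chaining the lemmas already established.

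\textbf{Step 1: regularize and decompose.} Apply \cref{lem:degree-regularization} to obtain $G^\circ$, which has $O(n_0)$ vertices and $O(m_0)$ edges. Its non-loop weights lie in $[1,2]$, so $\beta=2$. Every degree of $G^\circ$ lies in $[cd,Cd]$, so $\delta_{G^\circ}=\Omega(d)$, and $\Phi(G^\circ)\ge\phi^\circ=\Omega(\phi)$. This costs $O(m_0\log n)$. Since $\Delta^\#_{G^\circ}\le\Delta_{G^\circ}=O(d)$, \cref{lem:matching-decomposition} gives $q=O(d)$ matchings in $O(m_0\log d)$ time. Hence
\[
    R=\frac{4\beta q}{\delta_{G^\circ}(\phi^\circ)^2}=O(\phi^{-2}),
    \qquad
    k=O(\phi^{-2}\eps^{-2}\log n_0).
\]
The log factors in this preprocessing ($\log n$, $\log d$) must be absorbed. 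Since $m_0\ge\vol_G(V)\ge nd$ up to constants, I would either bound them by $L$ and absorb them into the second term of each bound, or argue that the constructions are in fact linear. I would try absorption first, because every second term contains a factor of $L$ multiplying at least $m_0$.

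\textbf{Step 2: correctness and size.} By the discussion following \cref{alg:matching-sparsifier}, with exact or sufficiently accurate scores, \cref{lem:deterministic-chernoff} and \cref{lem:inexact-scores} make $H^\circ$ a $(1\pm\eps)$-sparsifier of $G^\circ$. By \cref{lem:degree-regularization}(iv), contracting the fibers yields a $(1\pm\eps)$-sparsifier of $G$. Deleting self-loops changes no Laplacian, and contraction does not increase the edge count. So \eqref{eq:expander-size} gives $O(n_0R\eps^{-2}\log n_0)=O(n_0\phi^{-2}\eps^{-2}L)$ edges, using $\log n_0=O(L)$ because $L$ is defined with ambient upper bounds.

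\textbf{Step 3: running time.} In the dense mode, each of the $k$ iterations costs \eqref{eq:dense-expander-time} with $n_0,m_0$ in place of $n,m$. Multiplying by $k=O(\phi^{-2}\eps^{-2}L)$ gives \eqref{eq:expander-cor-dense}; the dense analysis was stated for exactly this regime $R=O(\phi^{-2})$. In the sparse mode, substitute $R=O(\phi^{-2})$ into \eqref{eq:sparse-expander-time}. This requires checking $\Lambda=O(L)$: since $k$ is polynomial in $n_0/(\eps\phi)$, $\log(256n_0k/(\eps\phi))=O(L)$. The bound then becomes
\[
    n_0m_0\phi^{-3}\eps^{-2}\log n\cdot L\cdot\bigl(\sqrt{\eps^{-1}L\log n}+L\bigr).
\]
With $\log n\le L$, this is $O(n_0m_0\phi^{-3}\eps^{-2}L^2(\eps^{-1/2}L+L))=O(n_0m_0\phi^{-3}\eps^{-5/2}L^3)$, since $\eps\le1/2$. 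The final contraction is $O(|E(H^\circ)|)$, which is dominated.

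The main obstacle is bookkeeping rather than ideas: every $\log$ quantity ($\log n_0$, $\Lambda$, and the $\log(1/(\phi\eps))$ terms) must be uniformly $O(L)$, and the regularized instance's $\phi^\circ$, $n_0$, $m_0$ must be the parameters fed into the per-iteration bounds. I would settle this first by showing $\phi\ge1/\mathrm{poly}(n+m)$ implicitly via $L$'s definition, so that all these logarithms are $O(L)$.
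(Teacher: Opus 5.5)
Your proposal is correct and follows the paper's proof essentially step for step. You regularize, split into $q=O(d)$ matchings so that $R=O(\phi^{-2})$ and $k=O(\phi^{-2}\eps^{-2}L)$, obtain correctness and size from the matching-sparsifier analysis plus fiber contraction, multiply the per-iteration costs by $k$, and substitute $\Lambda=O(L)$ into the sparse bound while absorbing the preprocessing. Your closing worry needs no extra argument and no lower bound on $\phi$, since $\log(1/(\eps\phi))\le L$ follows directly from the definition of $L$.
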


\begin{proof}
Run \cref{alg:expander-sparsify}, and write
\[
    N:=|V(G^\circ)|,
    \qquad
    M:=|E(G^\circ)|.
\]
By \cref{lem:degree-regularization},
\[
    N=O(n_0),
    \qquad
    M=O(m_0),
\]
every weighted degree in \(G^\circ\) is \(\Theta(d)\), its maximum
combinatorial degree is \(O(d)\), and
\(\Phi(G^\circ)=\Omega(\phi)\).  Its non-loop edge weights also remain in
\([1,2]\).

\paragraph{Number of selected matchings and correctness.}
By \cref{lem:matching-decomposition}, the non-loop edges of \(G^\circ\)
can be partitioned into \(q=O(d)\) matchings.  In
\eqref{eq:R-general}, we therefore have
\[
    \beta\le2,
    \qquad
    \delta =\Theta(d),
    \qquad
    R=O(\phi^{-2}).
\]
Consequently,
\[
    k=O\!\left(\phi^{-2}\eps^{-2}L\right)
\]
matchings are selected.  The analysis of
\cref{alg:matching-sparsifier} shows that the reweighted union
\(H^\circ\) is a \((1\pm\eps)\)-spectral sparsifier of \(G^\circ\).
Property (iv) of \cref{lem:degree-regularization} then shows that
contracting every fiber in \(H^\circ\) and deleting self-loops produces a
\((1\pm\eps)\)-spectral sparsifier \(H\) of \(G\).

\paragraph{Output size.}
Every selected matching contains at most \(N/2\) edges, so
\eqref{eq:expander-size} and \(R=O(\phi^{-2})\) give
\[
    |E(H)|
    \le |E(H^\circ)|
    =
    O\!\left(
        N\phi^{-2}\eps^{-2}\log N
    \right)
    =
    O\!\left(
        n_0\phi^{-2}\eps^{-2}L
    \right).
\]
This proves \eqref{eq:expander-cor-size}.

\paragraph{Dense implementation.}
One dense score evaluation takes
\[
    O\!\left(
        M+N^\omega\log N+N^2L^{O(1)}
    \right)
    =
    O\!\left(
        m_0+n_0^\omega\log n_0+n_0^2L^{O(1)}
    \right)
\]
time.  Multiplying by
\(k=O(\phi^{-2}\eps^{-2}L)\) and including degree
regularization and the matching decomposition proves
\eqref{eq:expander-cor-dense}.

\paragraph{Sparse implementation.}
The definition of \(L\), together with
\(k=O(\phi^{-2}\eps^{-2}L)\), gives \(\Lambda=O(L)\).
Substituting
\[
    N=O(n_0),
    \qquad
    M=O(m_0),
    \qquad
    R=O(\phi^{-2}),
    \qquad
    \Lambda=O(L)
\]
into \eqref{eq:sparse-expander-time}, the expression in square brackets
is \(O(\eps^{-1/2}L)\).  Since
\(\phi^{-1}R=O(\phi^{-3})\), the total sparse score-evaluation time is
\[
    O\!\left(
        n_0m_0\phi^{-3}\eps^{-5/2}L^3
    \right).
\]
The regularization, matching decomposition, and contraction costs are
absorbed by this bound, proving \eqref{eq:expander-cor-sparse}.
\end{proof}

\section{Global Sparsification}
\label{sec:global-sparsification}
The preceding routine applies to one approximately regular expander.  To
handle an arbitrary graph with weights in $[1,2]$, we repeatedly expose
edge-disjoint subgraphs satisfying those hypotheses.  Low-degree peeling
first guarantees the minimum degree needed by regularization while
discarding only a constant fraction of the current edges.  A deterministic
expander decomposition then places a constant fraction of the surviving
edges inside high-conductance clusters.  We sparsify the cluster graphs
and continue on the remaining edges.  The residual edge count decreases
geometrically, and edge-disjoint composition turns all local guarantees
into one global spectral approximation.

For a cluster \(U\subseteq V(G)\), let \(G\langle U\rangle\) be obtained
from \(G[U]\) by adding
a self-loop of weight \(\deg_G(v)-\deg_{G[U]}(v)\) at every \(v\in U\).  These loops
retain the degree in the whole graph without changing the Laplacian of the cluster.  Thus
\(\deg_{G\langle U\rangle}(v)=\deg_G(v)\).

The following deterministic expander decomposition algorithm follows from
\cite[Theorem~2.10]{Chuzhoy23} by adding $\deg(v)$ self-loops for every $v \in V$
before applying the decomposition and then adjusting constants.

\begin{theorem}[Deterministic expander decomposition]
\label{thm:expander-decomposition}
There is a deterministic $m^{1+o(1)}$-time algorithm that, given an
$n$-vertex, $m$-edge unweighted graph $G$, computes a partition
\[
    V(G)=V_1\mathbin{\dot\cup}\cdots\mathbin{\dot\cup}V_t
\]
such that at most $m/2$ edges have endpoints in distinct clusters and,
for every $i$, the graph
$G\langle V_i\rangle$ has conductance at least
$\phi_n:=\log^{-9-o(1)}n$.
\end{theorem}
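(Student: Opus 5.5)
The plan is to reduce directly to the black-box decomposition of \cite[Theorem~2.10]{Chuzhoy23} by padding with self-loops, and then to translate its guarantees into the form stated in \cref{thm:expander-decomposition}.

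\textbf{Padding.} Given $G$, form $G'$ by attaching $\deg_G(v)$ unit self-loops at every vertex $v$. Then $|E(G')|=m+\sum_v\deg_G(v)=3m$, and $\deg_{G'}(v)=3\deg_G(v)$ since each loop contributes to weighted degree. I would invoke Chuzhoy's algorithm on $G'$ with conductance parameter $\phi$, to be fixed below. The result is a partition $V_1,\dots,V_t$ such that:
\begin{itemize}
\item the number of inter-cluster edges is at most $\phi\cdot\log^{c_1}(m)\cdot|E(G')|$;
\item each $G'[V_i]$ has conductance at least $\phi/\log^{c_2+o(1)}m$, with its loops counted in volumes.
\end{itemize}
The running time is $|E(G')|^{1+o(1)}=m^{1+o(1)}$. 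Self-loops never cross clusters, so the inter-cluster edges of $G'$ are exactly those of $G$.

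\textbf{Choosing $\phi$.} Set $\phi:=1/(6\log^{c_1}m)$, so that at most $3m\cdot\phi\log^{c_1}m\le m/2$ edges are cut. With the exponents of Chuzhoy's theorem, the resulting cluster conductance should be $\log^{-9-o(1)}m$. Since $m\le n^2$ for the graphs we apply this to, after aggregating parallel edges if needed, this equals $\log^{-9-o(1)}n$, matching $\phi_n$ up to adjusting constants.

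\textbf{Transferring conductance to $G\langle V_i\rangle$.} Fix a cluster $U=V_i$ and a cut $S\subseteq U$. The two graphs $G'[U]$ and $G\langle U\rangle$ have the same non-loop edges, so the cut weights $E(S,U\setminus S)$ coincide. For volumes, a vertex $v\in U$ has degree $\deg_{G[U]}(v)+2\deg_G(v)\le3\deg_G(v)$ in $G'[U]$, each of its $\deg_G(v)$ loops contributing weight $2$, while it has degree exactly $\deg_G(v)$ in $G\langle U\rangle$. Hence volumes in $G'[U]$ are at most three times those in $G\langle U\rangle$, and therefore
\[
\Phi(G\langle U\rangle)\ge\Phi(G'[U])\ge\phi/\log^{c_2+o(1)}m.
\]
The loops are what make this comparison work. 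Without them, $G[U]$ would measure volume by internal degree, which can be far smaller than $\deg_G(v)$, and the bound would not transfer to $G\langle U\rangle$.

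\textbf{Main obstacle.} The delicate step is matching the exact form of the cited theorem. It must accept self-loops (or multigraphs), and its conductance must be defined with loops counted in volume. If it requires simple graphs, I would first try simulating each loop multiset by a gadget that adds $\Theta(\deg_G(v))$ volume at $v$ without creating sparse cuts. One option is a small expander hanging off $v$, whose vertices would be placed in $v$'s cluster. The cost is only constant factors in $m$ and in the conductance. I would also check that $\log^{O(1)}$ losses in Chuzhoy's bounds indeed yield the exponent $9+o(1)$.
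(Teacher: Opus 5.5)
Your proposal is correct and is essentially the paper's own argument: pad every vertex with $\deg_G(v)$ self-loops, apply \cite[Theorem~2.10]{Chuzhoy23} with $\phi$ chosen so that at most $m/2$ edges are cut, and transfer the conductance of the padded clusters to $G\langle V_i\rangle$ after adjusting constants. One small correction: $\Phi(G\langle U\rangle)\ge\Phi(G'[U])$ follows from the pointwise lower bound $\deg_{G'[U]}(v)\ge\deg_G(v)=\deg_{G\langle U\rangle}(v)$, which the loops alone supply, and not from the upper bound $\deg_{G'[U]}(v)\le3\deg_G(v)$ that you cite, since that bound only compares the two conductances in the opposite direction.
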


\begin{algorithm}[H]
\caption{\(\textsc{GlobalSparsify}(G,\eps,\mathtt{mode})\)}
\label{alg:global-sparsify}
\begin{algorithmic}[1]
\Require A multigraph \(G=(V,E,w)\) whose non-loop weights lie in
\([1,2]\), an accuracy \(0<\eps\le1/2\), and
\(\mathtt{mode}\in\{\mathtt{dense},\mathtt{sparse}\}\)
\Ensure A \((1\pm\eps)\)-spectral sparsifier of \(G\)
\State \(R\gets E^\times(G)\) and \(H\gets\emptyset\)
\While{\(|R|\ge n\)}
    \State \(F\gets(V,R)\), \(m_F\gets|R|\), and
    \(d\gets2w(R)/n\)
    \State \(U\gets V\)
    \While{some \(v\in U\) satisfies
    \(\deg_{F[U]}(v)<d/10\)}
        \State \(U\gets U\setminus\{v\}\)
    \EndWhile
    \State Compute \(V_1\mathbin{\dot\cup}\cdots
    \mathbin{\dot\cup}V_t=U\) by applying
    \cref{thm:expander-decomposition} to \(F[U]\) without weights
    \For{\(i=1,\ldots,t\)}
        \If{\(E^\times(F[V_i])\ne\emptyset\)}
            \State \(Q_i\gets F[U]\langle V_i\rangle\)
            \State \(H_i\gets
            \Call{ExpanderSparsify}
            {Q_i,d,\phi_n/2,\eps,\mathtt{mode}}\)
            \State \(H\gets H\mathbin{\dot\cup}H_i\)
        \EndIf
    \EndFor
    \State \(R\gets R\setminus
    \biguplus_{i=1}^tE^\times(F[V_i])\)
\EndWhile
\State Add every edge of \(R\), with its original weight, to \(H\)
\State \Return \(H\)
\end{algorithmic}
\end{algorithm}

\begin{theorem}[Deterministic sparsification at one weight scale]
\label{thm:global-sparsification}
Let $G$ be an $n$-vertex, $m$-edge weighted multigraph whose non-loop
edge weights lie in $[1,2]$, and let $0<\eps\le1/2$.  Suppose
$m,\eps^{-1}\le n^{O(1)}$.  Then there is a deterministic algorithm that constructs a
$(1\pm\eps)$-spectral sparsifier with
\begin{equation}
\label{eq:global-sparsifier-size}
    O\!\left(n\phi_n^{-2}\eps^{-2} \log^2 n\right)
\end{equation}
edges.  Its dense implementation runs in
\begin{equation}
\label{eq:dense-global-time}
    m^{1+o(1)}
    +O\!\left(
        (n^\omega \log^3 n +m \log n) \phi_n^{-2}\eps^{-2}
    \right)
\end{equation}
time, while its sparse implementation runs in
\begin{equation}
\label{eq:sparse-global-time}
    m^{1+o(1)}
    +O\!\left(
        nm\phi_n^{-3}\eps^{-5/2}\log^3 n
    \right)
\end{equation}
time.

Since $\phi_n^{-1}=\log^{9+o(1)}n$, these bounds are
$O(n\eps^{-2}\log^{20+o(1)}(n/\eps))$ edges,
\[
    m^{1+o(1)}
    +O\!\left(
        n^\omega\eps^{-2}\log^{21+o(1)}(n/\eps)
        +m\eps^{-2}\log^{19+o(1)}(n/\eps)
    \right)
\]
time using the dense implementation, and
\[
    m^{1+o(1)}
    +O\!\left(
        nm\eps^{-5/2}\log^{30+o(1)}(n/\eps)
    \right)
\]
time using the sparse implementation.
\end{theorem}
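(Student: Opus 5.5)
We analyze \cref{alg:global-sparsify} in three parts: that one outer round removes a constant fraction of the residual edges, that every call to \cref{lem:expander-sparsification} meets its hypotheses, and that the costs of all rounds sum to the stated bounds.

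\textbf{Correctness.} The subgraphs $Q_i=F[U]\langle V_i\rangle$ over all rounds, together with the final residual $R$, form an edge-disjoint decomposition of $E^\times(G)$ once self-loops are ignored. Self-loops do not change Laplacians, so each $H_i$ is a $(1\pm\eps)$-sparsifier of the non-loop part of $Q_i$. The residual edges are copied exactly. \Cref{lem:composition} then gives the global guarantee.

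\textbf{Hypotheses of the expander lemma.} Every vertex surviving the peeling has $\deg_{F[U]}(v)\ge d/10$, and the loops preserve these degrees, so $\delta_{Q_i}=\Omega(d)$. Also $d\ge1$, since $|R|\ge n$ and weights are at least $1$. For conductance, \cref{thm:expander-decomposition} certifies the unweighted graph $F[U]\langle V_i\rangle$. Since weights lie in $[1,2]$, cut weights and volumes change by at most a factor $2$, so the weighted conductance is at least $\phi_n/2$.

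\textbf{Geometric decrease.} Peeling deletes at most $n$ vertices, each with fewer than $d/10$ remaining degree. The peeled weight is therefore less than $nd/10=w(R)/5$, so $F[U]$ retains at least $4w(R)/5\ge 2|R|/5$ edges. The decomposition keeps at least half of these inside clusters. Each round therefore removes at least $|R|/5$ edges, and there are $O(\log(m/n))=O(\log n)$ rounds. The loop exits with fewer than $n$ residual edges.

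\textbf{Size and time.} In one round, apply \cref{lem:expander-sparsification} to $Q_i$ with $n_i=|V_i|$. The volume of $V_i$ in $Q_i$ is $\vol_{F[U]}(V_i)$, so $\sum_i\vol_{Q_i}/d\le 2w(R)/d=n$. Hence $\sum_i n_{0,i}=O(n)$ and $\sum_i m_{0,i}=O(|R|)$. With $L=O(\log n)$, which uses $m,\eps^{-1}\le n^{O(1)}$, the per-round output is $O(n\phi_n^{-2}\eps^{-2}\log n)$ edges. Over $O(\log n)$ rounds this gives \eqref{eq:global-sparsifier-size}.

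For the sparse mode, $\sum_i n_{0,i}m_{0,i}\le(\sum_i n_{0,i})(\sum_i m_{0,i})=O(n|R|)$. Because $|R|$ decreases geometrically, summing over rounds gives $O(nm\phi_n^{-3}\eps^{-5/2}\log^3n)$. Each round's expander decomposition costs $|R|^{1+o(1)}$, and these also sum geometrically to $m^{1+o(1)}$. Peeling takes linear time with a queue.

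For the dense mode, superadditivity of $x\mapsto x^\omega$ gives $\sum_i n_{0,i}^\omega\le O(n)^\omega$. Summing the $m_0$ terms over rounds again gives $O(m)$. Multiplying by $\phi_n^{-2}\eps^{-2}L$ and by $O(\log n)$ rounds for the $n^\omega$ term yields a bound of the form \eqref{eq:dense-global-time}. The $n^2L^{O(1)}$ term requires care: it must be absorbed into $n^\omega\log^{O(1)}$, which only fits the stated power $\log^3n$ if the diagonalization polylog is tracked explicitly. I expect this exponent accounting to be the main obstacle. The other delicate step is checking that the conductance certified for the unweighted self-looped graph really transfers to $Q_i$, whose loops carry weights rather than counts.

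Finally, substituting $\phi_n^{-1}=\log^{9+o(1)}n$ gives the simplified bounds.
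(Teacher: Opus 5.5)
Your outline matches the paper's proof. It uses the same peeling charge, and your count of $2|R|/5$ surviving edges is as good as the paper's $3m_F/5$. It also has the same factor-$2$ conductance transfer, the same bounds $\sum_i n_{0,i}=O(n)$ and $\sum_i m_{0,i}=O(m_F)$, the product bound for the sparse mode, and geometric summation over $O(\log n)$ rounds.

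The one step you leave open is the $n^2L^{O(1)}$ term, and the fix you suggest is not the right one. Tracking the diagonalization polylog would only work if the unspecified exponent in the $n_0^2L^{O(1)}$ term of \cref{lem:expander-sparsification} happened to be at most $1$. That exponent comes from the $\log^{O(1)}$ of the cited diagonalization routine. After the factor $\phi_n^{-2}\eps^{-2}L$ and the $O(L)$ rounds, you need $n^2L^{c+2}\le n^\omega\log^3 n$ with the polylog alone doing the work. What you are missing is that $\omega>2$ is treated as a fixed constant throughout (the paper uses $1/(\omega-2)$ later). So $n^{\omega-2}$ dominates every fixed power of $L=O(\log n)$.

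With $\sum_i n_{0,i}^2\le(\sum_i n_{0,i})^2=O(n^2)$, the term $n^2L^{O(1)}$ in each round is therefore $O(n^\omega L)$. The per-round dense cost becomes $O\bigl((n^\omega L+m_F)\phi_n^{-2}\eps^{-2}L\bigr)$, and summing over rounds gives \eqref{eq:dense-global-time}. This is exactly how the paper absorbs it.

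Your second worry, about loops carrying weights rather than counts, is already settled by your own factor-$2$ argument. By construction the loops make
$\vol_{Q_i}(S)=\sum_{v\in S}\deg_{F[U]}(v)$ and $\vol_{F^\#[U]\langle V_i\rangle}(S)=\sum_{v\in S}\deg^\#_{F[U]}(v)$, so individual loop weights never enter. The argument uses only two facts:
\begin{itemize}
\item the degree comparison $\deg_{F[U]}(v)\le 2\deg^\#_{F[U]}(v)$;
\item the cut inequality $w(E_{Q_i}(S,V_i\setminus S))\ge|E_{F^\#}(S,V_i\setminus S)|$.
\end{itemize}
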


\begin{proof}
Run \cref{alg:global-sparsify}.  Consider an iteration with residual
graph \(F=(V,R)\), and write
\[
    m_F:=|R|,
    \qquad
    W_F:=w(R).
\]
Since every residual edge has weight in \([1,2]\),
\[
    m_F\le W_F\le2m_F.
\]
The loop condition gives \(m_F\ge n\), so
\[
    d:=\frac{2W_F}{n}\ge2.
\]

We first analyze the low-degree peeling step.  Charge an edge, with its
weight, when its first endpoint is deleted.  When a vertex is deleted,
its current weighted degree is less than \(d/10\); hence the total charged
weight is less than
\[
    \frac{nd}{10}
    =
    \frac{W_F}{5}.
\]
Every deleted edge is charged exactly once.  Since every edge has weight
at least \(1\), the number of deleted edges is less than
\[
    \frac{W_F}{5}
    \le
    \frac{2m_F}{5}.
\]
It follows that
\[
    |E(F[U])|
    \ge
    m_F-\frac{W_F}{5}
    \ge
    \frac{3m_F}{5}.
\]
Moreover, upon termination of the peeling step,
\[
    \deg_{F[U]}(v)\ge\frac d{10}
    \qquad\text{for every }v\in U.
\]

Let \(F^\#\) denote the unweighted multigraph obtained from \(F\) by
assigning unit weight to every edge while retaining parallel copies.
The expander decomposition applied to \(F^\#[U]\) leaves at least half of
the edges of \(F^\#[U]\) inside its clusters.  Therefore
\[
    \sum_{i=1}^t |E(F[V_i])|
    \ge
    \frac12|E(F[U])|
    \ge
    \frac{3m_F}{10}.
\]
Thus every iteration removes and sparsifies at least \(3m_F/10\)
residual edges.

For each cluster \(V_i\), let
\[
    Q_i:=F[U]\langle V_i\rangle.
\]
The self-loops preserve the degree, so
\[
    \deg_{Q_i}(v)=\deg_{F[U]}(v)
    \qquad\text{for every }v\in V_i.
\]
Consequently,
\[
    \delta_{Q_i}\ge\frac d{10}=\Omega(d).
\]

The expander decomposition guarantees
\[
    \Phi\!\left(F^\#[U]\langle V_i\rangle\right)\ge\phi_n.
\]
For every \(S\subseteq V_i\), the weights in \([1,2]\) give
\[
    w(E_{Q_i}(S,V_i\setminus S))
    \ge
    |E_{F^\#}(S,V_i\setminus S)|
\]
and
\[
    \vol_{Q_i}(S)
    =
    \sum_{v\in S}\deg_{F[U]}(v)
    \le
    2\sum_{v\in S}\deg^\#_{F[U]}(v)
    =
    2\vol_{F^\#[U]\langle V_i\rangle}(S).
\]
The same inequality holds for \(V_i\setminus S\).  It follows that
\[
    \Phi(Q_i)
    \ge
    \frac12
    \Phi\!\left(F^\#[U]\langle V_i\rangle\right)
    \ge
    \frac{\phi_n}{2}.
\]
Thus every call to \(\textsc{ExpanderSparsify}\) satisfies its minimum
degree and conductance hypotheses.

Define
\[
    n_i
    :=
    |V_i|+\frac{\vol_{Q_i}(V_i)}d,
    \qquad
    m_i
    :=
    |E(F[V_i])|+\vol_{Q_i}(V_i).
\]
Since the clusters partition \(U\) and degrees are preserved,
\[
\begin{aligned}
    \sum_{i=1}^t \vol_{Q_i}(V_i)
    =
    \vol_{F[U]}(U)
    =
    2w(E(F[U]))
    \le
    2W_F
    =
    nd.
\end{aligned}
\]
Therefore
\[
    \sum_{i=1}^t n_i
    \le
    |U|+\frac{nd}{d}
    \le
    2n
    =
    O(n).
\]
Similarly, using \(W_F\le2m_F\),
\[
\begin{aligned}
    \sum_{i=1}^t m_i
    =
    \sum_{i=1}^t|E(F[V_i])|
      +\sum_{i=1}^t\vol_{Q_i}(V_i)
    \le
    m_F+2W_F
    \le
    5m_F
    =
    O(m_F).
\end{aligned}
\]

By \cref{lem:expander-sparsification}, the number of edges produced in
this iteration is
\[
\begin{aligned}
    \sum_{i=1}^t
    O\!\left(
        n_i\phi_n^{-2}\eps^{-2}L
    \right)
    &=
    O\!\left(
        n\phi_n^{-2}\eps^{-2}L
    \right).
\end{aligned}
\]

For the dense implementation, the total work of the expander
sparsification calls in one iteration is
\[
\begin{aligned}
O\!\left(
    \sum_{i=1}^t
    \left[
        m_i+
        \bigl(n_i^\omega L+n_i^2L^{O(1)}+m_i\bigr)
        \phi_n^{-2}\eps^{-2}L
    \right]
\right) = O\!\left(
\bigl(n^\omega L+m_F\bigr)
        \phi_n^{-2}\eps^{-2}L
    \right).
\end{aligned}
\]
Here
\(\sum_i n_i^2\le(\sum_i n_i)^2=O(n^2)\), and the resulting
\(n^2L^{O(1)}\) term is absorbed by \(n^\omega L\).

For the sparse implementation,
\[
    \sum_{i=1}^t n_i m_i
    \le
    \left(\sum_{i=1}^t n_i\right)
    \left(\sum_{i=1}^t m_i\right)
    =
    O(nm_F).
\]
Hence the work of all expander sparsification calls in one iteration is
\[
    O\!\left(
        nm_F\phi_n^{-3}\eps^{-5/2}L^3
    \right).
\]

The peeling, cluster construction, and other bookkeeping take
\(O(m_F)\) time in an iteration.  The deterministic expander
decomposition takes \(m_F^{1+o(1)}\) time.  Since every iteration removes
at least \(3/10\) of the residual edges, the residual edge counts decrease
geometrically.  Consequently, there are
\[
    O(\log(2m))=O(L)
\]
iterations and
\[
    \sum_F m_F=O(m).
\]
The total cost of all expander decompositions and bookkeeping is therefore
\(m^{1+o(1)}\).

Each iteration outputs
\(O(n\phi_n^{-2}\eps^{-2}L)\) edges, and there are \(O(L)\) iterations.
The final residual graph contains fewer than \(n\) edges and is included
without modification.  Thus the total number of output edges is
\[
    O\!\left(
        n\phi_n^{-2}\eps^{-2}L^2
    \right),
\]
which gives \eqref{eq:global-sparsifier-size} because \(L=O(\log n)\).

For the dense implementation, the \(n^\omega\)-dependent term is incurred
once per iteration.  The total running time is therefore
\[
\begin{aligned}
    m^{1+o(1)}
    &+
    O\!\left(
        n^\omega\phi_n^{-2}\eps^{-2}L^3
        +
        m\phi_n^{-2}\eps^{-2}L
    \right),
\end{aligned}
\]
which is \eqref{eq:dense-global-time}.  For the sparse implementation,
summing over the geometrically decreasing residual edge counts gives
\[
    m^{1+o(1)}
    +
    O\!\left(
        nm\phi_n^{-3}\eps^{-5/2}L^3
    \right),
\]
which is \eqref{eq:sparse-global-time}.

Finally, the graphs \(F[V_i]\) sparsified in the different iterations,
together with the final residual graph, partition the original edge
multiset. Thus,
\[
    L_{Q_i}=L_{F[V_i]}.
\]
Each \(H_i\) is therefore a \((1\pm\eps)\)-spectral sparsifier of the
corresponding edge-induced piece \(F[V_i]\).  Applying
\cref{lem:composition} to these edge-disjoint pieces and the final
residual graph proves that the returned graph \(H\) is a
\((1\pm\eps)\)-spectral sparsifier of \(G\).
\end{proof}

\subsection{Weighted Graphs}
\label{sec:weighted}
The one-scale algorithm requires all non-loop weights to differ by at most
a factor of two.  We remove this restriction by decomposing a general
weighted graph into dyadic weight classes.  These classes are
edge-disjoint so sparsifying
every class with error \(\eps\) and taking the union still gives error
\(\eps\) by \cref{lem:composition}.  Only the vertex-dependent work and output size are repeated
across classes; all edge-dependent terms sum to their value on the
original graph.

For a graph $G$ with positive weights, let
\[
    b(G):=
    1+\left\lceil
        \log_2\frac{w_{\max}}{w_{\min}}
    \right\rceil.
\]
Thus the non-loop edges of $G$ occupy at most $b(G)$ dyadic weight
classes.

\begin{algorithm}[H]
\caption{\(\textsc{WeightedSparsify}(G,\eps,\mathtt{mode})\)}
\label{alg:weighted-sparsify}
\begin{algorithmic}[1]
\Require A positively weighted graph \(G\), an accuracy
\(0<\eps\le1/2\), and
\(\mathtt{mode}\in\{\mathtt{dense},\mathtt{sparse}\}\)
\Ensure A \((1\pm\eps)\)-spectral sparsifier of \(G\)
\State Partition the non-loop edges into
\(E_a:=\{e:2^a\le w(e)<2^{a+1}\}\)
\State \(H\gets\emptyset\)
\ForAll{nonempty classes \(E_a\)}
    \State Let \(G_a\) contain \(E_a\), with every edge weight divided by
    \(2^a\)
    \State \(H_a\gets
    \Call{GlobalSparsify}{G_a,\eps,\mathtt{mode}}\)
    \State Multiply every edge weight of \(H_a\) by \(2^a\)
    \State \(H\gets H\mathbin{\dot\cup}H_a\)
\EndFor
\State \Return \(H\)
\end{algorithmic}
\end{algorithm}

\begin{theorem}[Deterministic sparsification of weighted graphs]
\label{thm:weighted-sparsification}
Let $G$ be an $n$-vertex, $m$-edge positively weighted multigraph with
$b:=b(G)$, and let $0<\eps\le1/2$.  Suppose
$m,\eps^{-1}\le n^{O(1)}$.  There is a deterministic algorithm
that constructs a $(1\pm\eps)$-spectral sparsifier with
\begin{equation}
\label{eq:weighted-sparsifier-size}
    O\!\left(
        n b \phi_n^{-2}\eps^{-2} \log^2 n
    \right)
\end{equation}
edges.  Its dense implementation runs in
\begin{equation}
\label{eq:weighted-dense-time}
    m^{1+o(1)}
    +O\!\left(
        (n^\omega b \log^3 n +m \log n) \phi_n^{-2}\eps^{-2}
    \right)
\end{equation}
time, while its sparse implementation runs in
\begin{equation}
\label{eq:weighted-sparse-time}
    m^{1+o(1)}
    +O\!\left(
        nm\phi_n^{-3}\eps^{-5/2} \log^3 n
    \right)
\end{equation}
time.

Parallel edges may be aggregated before running the algorithm.
If the resulting graph has $\bar m$ edges and $\bar b$ dyadic weight
classes, the same conclusions hold with $m,b$ replaced by
$\bar m,\bar b$, together with an initial $O(m)$ aggregation cost.  In
particular, $\bar m\le\min\{m,\binom n2\}$ after self-loops are deleted,
and
\[
    \bar b\le b+\lceil\log_2m\rceil+1.
\]
\end{theorem}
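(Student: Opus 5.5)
We run \cref{alg:weighted-sparsify} and argue correctness, size, and time class by class. For correctness, the classes \(E_a\) partition the non-loop edges of \(G\), and self-loops contribute nothing to \(L_G\). So \(L_G=\sum_a 2^aL_{G_a}\). Each \(G_a\) has non-loop weights in \([1,2)\), so \cref{thm:global-sparsification} applies and gives a \((1\pm\eps)\)-sparsifier \(H_a\) of \(G_a\). Rescaling by \(2^a\) preserves the two-sided inequality. \cref{lem:composition} then shows that \(H=\dot\bigcup_aH_a\) is a \((1\pm\eps)\)-sparsifier of \(G\). Each class \(G_a\) lives on at most \(n\) vertices; isolated vertices are discarded, so \(n_a\le n\). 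Its edge count \(m_a\) satisfies \(\sum_am_a\le m\). Since \(m,\eps^{-1}\le n^{O(1)}\), the hypotheses of \cref{thm:global-sparsification} hold for every class, and its logarithmic factors in \(n_a\) are \(O(\log n)\).

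\textbf{Size.} There are at most \(b\) nonempty classes, since the weights lie in \([w_{\min},w_{\max}]\). Summing \eqref{eq:global-sparsifier-size} over the classes gives \(O(nb\phi_n^{-2}\eps^{-2}\log^2n)\), which is \eqref{eq:weighted-sparsifier-size}.

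\textbf{Time.} The dense bound \eqref{eq:dense-global-time} contributes \(n^\omega\log^3n\,\phi_n^{-2}\eps^{-2}\) per class, hence a factor \(b\). Its edge terms sum to \(\sum_am_a^{1+o(1)}\le m^{1+o(1)}\) and \(O(m\log n\,\phi_n^{-2}\eps^{-2})\), by superadditivity of \(x\mapsto x^{1+o(1)}\). For the sparse bound, \(\sum_a n_am_a\le n\sum_am_a=nm\), so no factor \(b\) appears. The partition into classes costs \(O(m)\), which is absorbed.

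\textbf{Aggregation.} Merging parallel edges into one edge whose weight is the sum gives the same Laplacian, so any sparsifier of the aggregated graph is a sparsifier of \(G\). Hashing endpoint pairs does this in \(O(m)\) time, and deleting self-loops leaves \(\bar m\le\min\{m,\binom n2\}\). The remaining point, which needs the most care, is the bound on \(\bar b\). An aggregated weight is a sum of at most \(m\) original weights, so it lies in \([w_{\min},m\,w_{\max}]\). Its ratio is therefore at most \(m\,w_{\max}/w_{\min}\), and
\[
    \bar b\le1+\left\lceil\log_2\frac{w_{\max}}{w_{\min}}+\log_2m\right\rceil\le b+\lceil\log_2m\rceil+1,
\]
using \(\lceil x+y\rceil\le\lceil x\rceil+\lceil y\rceil\). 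Applying the preceding argument to the aggregated graph then gives the stated conclusions with \(\bar m,\bar b\).
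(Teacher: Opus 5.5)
Your proposal is correct and follows the paper's proof essentially step for step: you apply \cref{thm:global-sparsification} to each rescaled class and combine via \cref{lem:composition}, you pay a factor $b$ only in the output size and the $n^\omega$ term, you sum the edge-dependent and sparse terms using $\sum_a m_a\le m$, and you bound the aggregated weight span by $\bar w_{\max}\le m\,w_{\max}$ (your ceiling chain even gives the slightly sharper $\bar b\le b+\lceil\log_2 m\rceil$). One small nit: since the algorithm is claimed to be deterministic, aggregate parallel edges by radix or bucket sort on endpoint pairs rather than by hashing; this costs a worst-case $O(m+n)=O(m)$ because isolated vertices are discarded.
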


\begin{proof}
Run \cref{alg:weighted-sparsify}.  Within each scaled class, every edge
weight lies in $[1,2)$, so \cref{thm:global-sparsification} applies.
Rescaling a class back by $2^a$ preserves its relative spectral error.
Since the classes partition the edge multiset, their union is a
$(1\pm\eps)$-spectral approximation of $G$ by
\cref{lem:composition}.

Each class contributes at most
$O(n\phi_n^{-2}\eps^{-2}L^2)$ output edges and one copy of the
vertex-dependent dense work, which proves
\eqref{eq:weighted-sparsifier-size} and the \(n^\omega b L^3\) term in
\eqref{eq:weighted-dense-time}.  If $m_a:=|E_a|$, then
$\sum_am_a=m$.  Consequently, the edge-dependent dense work and all the
sparse work sum over the classes without an additional factor $b$.
Moreover, the preprocessing costs
$\sum_am_a^{1+o(1)}\le m^{1+o(1)}$.  This proves
\eqref{eq:weighted-dense-time} and \eqref{eq:weighted-sparse-time}.

Aggregating parallel edges preserves the Laplacian.  It leaves at
most one non-loop edge for each unordered pair of vertices.  An aggregated
weight is the sum of at most $m$ original weights, so its base $2$ logarithm can
exceed the original range by at most $\lceil\log_2m\rceil+1$.  Applying
the preceding analysis to the aggregated graph proves the final claim.
\end{proof}

\begin{lemma}[Weight spread]
\label{lem:weight-spread}
Consider one invocation of the dense or sparse weighted sparsification
algorithm with accuracy $\eps$, size bounds $n,m$, and
conductance lower bound $\phi_n$.  Suppose
$m,\eps^{-1}\le n^{O(1)}$, and suppose the ratio between the largest and
smallest input edge weights is less than $2^b$.  After aggregating
parallel edges, the ratio between the largest and smallest output edge
weights is at most $2^{b+O(\log n)}$.
\end{lemma}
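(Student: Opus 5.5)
We will track every output edge weight back to an input weight and bound the multiplicative distortion introduced at each stage of \cref{alg:weighted-sparsify}. Let $w_{\min}$ be the smallest input weight, so every input weight lies in $[w_{\min},2^bw_{\min})$. The plan is to show that every output edge (before aggregation) has weight in $[w_{\min}\cdot n^{-O(1)},\,2^bw_{\min}\cdot n^{O(1)}]$, and then to account for aggregation separately.

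\emph{Upper and lower bounds per stage.} Inside a dyadic class $E_a$ with $2^a\ge w_{\min}/2$ and $2^a<2^bw_{\min}$, weights are rescaled to $[1,2)$. In \cref{alg:global-sparsify}, an output edge either comes from the final residual graph, with its original weight, or from a call to \textsc{ExpanderSparsify}. In the latter case, \textsc{MatchingSparsifier} returns copies of original (or fiber) edges of weight in $[1,2]$, scaled by $q/k$; multiplicities of repeated matchings add at most a factor $k$. Fiber edges and self-loops disappear under contraction, and a contracted original edge keeps its original weight times $q/k$. Thus each output edge of class $a$ has weight in
\[
 2^a\cdot\Bigl[\tfrac{q}{k},\,2k\cdot\tfrac qk\Bigr]=2^a\bigl[q/k,\,2q\bigr].
\]
Here $q=O(d)\le O(m)\le n^{O(1)}$ by \cref{lem:matching-decomposition} and \cref{lem:degree-regularization}, and $q\ge1$. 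Also $k=O(\phi_n^{-2}\eps^{-2}\log n)\le n^{O(1)}$ using $\eps^{-1}\le n^{O(1)}$. Hence every pre-aggregation output weight lies in $[w_{\min}n^{-O(1)},\,2^bw_{\min}n^{O(1)}]$.

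\emph{Aggregation.} After merging parallel edges, each aggregated weight is a sum of at most $|E(H)|\le n^{O(1)}$ such weights. This sum is at least the smallest summand and at most $n^{O(1)}$ times the largest. Hence the ratio of largest to smallest output weight is at most $2^b\cdot n^{O(1)}=2^{b+O(\log n)}$.

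The main obstacle is making sure the counting is honest. Specifically, we must check that no step introduces a scaling factor that is not polynomially bounded. The candidates are the $q/k$ rescaling, the multiplicity factor from selecting the same matching repeatedly, and the possible presence of $m^{1+o(1)}$ rather than polynomial sizes. For the first two, we must confirm that $d=2w(R)/n\le 4m$ bounds $q$ and that $k\ge1$; both follow from the algorithm's parameters. For the third, we need $m\le n^{O(1)}$ to ensure that all these quantities are $n^{O(1)}$. If the invocation is within the recursive scheme, where inputs are themselves sparsifier outputs, the same argument applies verbatim with $b$ denoting that invocation's input spread.
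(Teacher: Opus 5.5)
Your proposal is correct and follows essentially the same route as the paper. Within each scaled dyadic class, the only distortion is the $q/k$ factor from \textsc{MatchingSparsifier} with $1\le q,k\le n^{O(1)}$, and aggregation changes a weight by at most the polynomial number of output copies, so the span is $2^b\cdot n^{O(1)}=2^{b+O(\log n)}$. One small precision: your interval $2^a[q/k,2q]$ covers only the \textsc{ExpanderSparsify} outputs, while residual edges lie in $2^a[1,2)$; that interval still sits inside your global range, so the conclusion is unaffected.
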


\begin{proof}
Fix one dyadic input class and scale it so that its edge weights lie in
$[1,2)$.  Every edge copy retained by
\cref{alg:matching-sparsifier} is multiplied by $q/k$.  In every
nontrivial call,
\[
    1\le q\le n^{O(1)}
    \qquad\text{and}\qquad
    1\le k
    \le
    O\!\left(\phi_n^{-2}\eps^{-2}\log n\right)
    =
    n^{O(1)}.
\]
Hence both $q/k$ and $k/q$ are bounded by $n^{O(1)}$ over all
clusters and peeling rounds.  Before aggregation, every output copy
originating in this class therefore has weight in
$[n^{-O(1)},n^{O(1)}]$ in the scaled units.

Aggregating parallel output copies can increase an edge weight by at most
the total number of output copies, which is also $n^{O(1)}$ under the
stated size and accuracy assumptions.  Undoing the dyadic scaling shows
that every final output weight lies between
$n^{-O(1)}w_{\min}$ and $n^{O(1)}w_{\max}$.  Its dyadic span is therefore
\[
    1+\left\lceil
        \log_2\frac{n^{O(1)}w_{\max}}
                       {n^{-O(1)}w_{\min}}
    \right\rceil
    \le b+O(\log n),
\]
as claimed.
\end{proof}

\section{Recursive Self-Improvement}
\label{sec:self-improvement}
This section formalizes the self-reduction described in the introduction.
At each level, the recursive calls sparsify an edge-disjoint decomposition
into small block-pair graphs.  Their union is already much sparser than
the input, so one application of the sparse implementation decreases the number of edges further and has low runtime.
We track three quantities through the recursion: the vertex
exponent, the additive growth of the dyadic weight span, and the
multiplicative accumulation of spectral error.  The exponent recurrence
has fixed point \(2\), while the other two quantities incur only
polylogarithmic losses over \(O(\log n)\) levels.

Define
\begin{equation}
\label{eq:alpha-recurrence}
    \alpha_0:=\omega,
    \qquad
    \alpha_{r+1}:=3-\frac{1}{\alpha_r-1}.
\end{equation}

\begin{algorithm}[H]
\caption{\(\mathcal A_r(G,\eta)\)}
\label{alg:recursive-sparsification}
\begin{algorithmic}[1]
\Require A positively weighted graph \(G\), a level \(r\ge0\), and an
accuracy \(0<\eta\le1/2\)
\Ensure A spectral approximation of \(G\)
\State Aggregate parallel edges and delete self-loops
\If{\(|V(G)|<\eta^{-1}\)}
    \State \Return \(G\)
\EndIf
\If{\(r=0\)}
    \State \Return
    \(\Call{WeightedSparsify}{G,\eta,\mathtt{dense}}\)
\EndIf
\State \(s\gets
\left\lceil n^{1/(\alpha_{r-1}-1)}\right\rceil\)
\State Partition \(V(G)\) into \(V_1,\ldots,V_t\), each of size at most
\(s\)
\State \(J\gets\emptyset\)
\For{\(1\le a\le c\le t\)}
    \State Let \(G_{ac}\) contain \(E^\times(G[V_a])\) if \(a=c\), and
    \(E_G(V_a,V_c)\) if \(a<c\)
    \If{\(E(G_{ac})\ne\emptyset\)}
        \State Delete isolated vertices from \(G_{ac}\)
        \State \(J_{ac}\gets\mathcal A_{r-1}(G_{ac},\eta)\)
        \State \(J\gets J\mathbin{\dot\cup}J_{ac}\)
    \EndIf
\EndFor
\State Aggregate parallel edges in \(J\)
\State \Return
\(\Call{WeightedSparsify}{J,\eta,\mathtt{sparse}}\)
\end{algorithmic}
\end{algorithm}

\begin{theorem}[The \(r\)-level algorithm]
\label{thm:r-level}
Let \(G\) be a positively weighted \(n\)-vertex, \(m\)-edge graph whose
non-loop edge weights occupy at most \(b\) dyadic classes.  Let
\(0<\eta\le1/2\) and \(r\ge0\), and suppose
\(m \le n^{O(1)}\).  Then \(\mathcal A_r(G,\eta)\) returns a graph
\(H\) satisfying
\[
    (1-\eta)^{r+1}L_G
    \preceq L_H
    \preceq(1+\eta)^{r+1}L_G.
\]
It has
\begin{equation}
\label{eq:r-level-size}
    O\!\left(
        n (b+(r+1)\log(n))
        \phi_n^{-2}\eta^{-2}\log^2(n)
    \right)
\end{equation}
edges, and the running time is
\begin{equation}
\label{eq:r-level-time}
\begin{aligned}
    (r+1)m^{1+o(1)}
    +
    O\!\left(
        n^{\alpha_r} (b+(r+1)\log(n))
        \phi_n^{-5}\eta^{-9/2}\log^5(n)
    \right).
\end{aligned}
\end{equation}
\end{theorem}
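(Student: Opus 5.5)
We prove all three claims together by induction on \(r\), tracking the approximation factor, the output size (through the number of dyadic weight classes), and the running time.

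\textbf{Trivial cases.} If \(|V(G)|<\eta^{-1}\), the output is the aggregated graph itself. It is exact, has at most \(\binom n2< n\eta^{-1}\) edges (which fits \eqref{eq:r-level-size}), and costs \(O(m)\) time.

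\textbf{Base case \(r=0\).} Since \(n\ge\eta^{-1}\), we have \(\eta^{-1}\le n\) and \Cref{thm:weighted-sparsification} applies to the aggregated graph. Aggregation gives \(\bar b\le b+O(\log n)\) and \(\bar m\le m\). The approximation factor is \(1\pm\eta\). The dense bound \eqref{eq:weighted-dense-time} is dominated by \eqref{eq:r-level-time} with \(\alpha_0=\omega\), using \(m\le n^2\le n^\omega\) and \(\phi_n^{-2}\eta^{-2}\le\phi_n^{-5}\eta^{-9/2}\).

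\textbf{Inductive step, correctness.} The graphs \(G_{ac}\) form an edge-disjoint decomposition of the non-loop edges of the aggregated graph. By induction each \(J_{ac}\) satisfies
\[
(1-\eta)^{r}L_{G_{ac}}\preceq L_{J_{ac}}\preceq(1+\eta)^{r}L_{G_{ac}}.
\]
Summing these as in \cref{lem:composition} gives the same sandwich for \(J\) against \(G\). The final sparse call contributes one more factor \(1\pm\eta\), and the two-sided inequalities compose multiplicatively.

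\textbf{Inductive step, weights and size.} Weights need care. The weight span of each \(G_{ac}\) is at most that of \(G\) after aggregation, namely \(b+O(\log n)\). Applying \cref{lem:weight-spread} at each level adds \(O(\log n)\) to the dyadic span. Unrolling over \(r+1\) levels, and using \(\eta^{-1}\le n\) to keep the hypothesis \(m,\eta^{-1}\le n^{O(1)}\), the input \(J\) to the last call has at most \(b+O((r+1)\log n)\) classes. I would strengthen the induction hypothesis to state this span explicitly. \Cref{thm:weighted-sparsification} then gives the size bound \eqref{eq:r-level-size}, since \(J\) lives on \(n\) vertices.

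\textbf{Inductive step, running time (the main obstacle).} There are \(t=O(n/s)\) blocks and \(O(n^2/s^2)\) block pairs, each \(G_{ac}\) having at most \(2s\) vertices. Let \(B:=(b+(r+1)\log n)\phi_n^{-5}\eta^{-9/2}\log^5 n\).

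The recursive calls cost
\[
\sum_{a\le c}\Bigl[\,r\,m_{ac}^{1+o(1)}+O\bigl(s^{\alpha_{r-1}}B\bigr)\Bigr]\le r\,m^{1+o(1)}+O\!\left(n^2s^{\alpha_{r-1}-2}B\right).
\]
Within the inductive hypothesis, the class count of a subproblem is bounded by that of \(G\), so the weight parameter can be upper bounded uniformly by \(B\).

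The union \(J\) has \(\bar m=O\bigl((n^2/s)(b+r\log n)\phi_n^{-2}\eta^{-2}\log^2 n\bigr)\) edges by the inductive size bound. The sparse bound \eqref{eq:weighted-sparse-time} then costs
\[
\bar m^{1+o(1)}+O\!\left(n\bar m\,\phi_n^{-3}\eta^{-5/2}\log^3 n\right)=O\!\left(\frac{n^3}{s}B\right).
\]
This is exactly where the exponents \(\phi_n^{-5}=\phi_n^{-2}\phi_n^{-3}\) and \(\eta^{-9/2}=\eta^{-2}\eta^{-5/2}\) in the statement come from.

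For the \(\bar m^{1+o(1)}\) term, I would note that \(\bar m\le\binom n2\) after aggregation, so it is \(n^{2+o(1)}\). This may need to be absorbed via \(n^{\alpha_r}\ge n^{2+\Omega(1/(r+1))}\), or else charged to the \(m^{1+o(1)}\) term when \(m\) is large. This is the delicate spot, and I would first try bounding it by \(O(n^{\alpha_r}B)\) directly.

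Balancing with \(s=\lceil n^{1/(\alpha_{r-1}-1)}\rceil\), both \(n^2s^{\alpha_{r-1}-2}\) and \(n^3/s\) equal \(n^{3-1/(\alpha_{r-1}-1)}=n^{\alpha_r}\) up to the rounding of \(s\), which costs a constant factor since \(\alpha_{r-1}\le3\). Aggregation and partitioning take \(O(m)\) time. Adding everything gives \((r+1)m^{1+o(1)}+O(n^{\alpha_r}B)\), which is \eqref{eq:r-level-time}.
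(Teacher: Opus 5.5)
Your outline follows the paper's proof: induction on $r$, \cref{lem:composition} over the edge-disjoint block pairs, additive tracking of the dyadic span, and balancing $n^2s^{\alpha_{r-1}-2}$ against $n^3/s$. It departs at the step you flag yourself, the $\bar m^{1+o(1)}$ term of the final sparse call, and neither mechanism you suggest closes it. Bounding $\bar m\le\binom n2$ and absorbing $n^{2+o(1)}$ into $n^{\alpha_r}B$ via $\alpha_r-2=\Omega(1/(r+1))$ fails exactly where the theorem is used, $r=\lceil\ln n\rceil$. There $n^{\alpha_r-2}\le e$, while the $m^{o(1)}$ factor inherited from the deterministic expander decomposition need not be polylogarithmic, so it cannot be hidden in $B$. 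Charging the term to $m^{1+o(1)}$ ``when $m$ is large'' needs the fact the paper actually uses and you never state: $\bar m\le m$ unconditionally. Every output of \textsc{WeightedSparsify}, and inductively of $\mathcal A_{r-1}$, is a reweighted multiset of edges of its input: matching edges of $G^\circ$ contract back to original edges, and fiber edges become deleted self-loops. Hence $J_{ac}$ is supported on $E(G_{ac})$. Since the $G_{ac}$ are edge-disjoint, $J$ has at most $m$ edges after aggregation, and $\bar m^{1+o(1)}\le m^{1+o(1)}$ joins the $r\,m^{1+o(1)}$ of the recursive calls to give $(r+1)m^{1+o(1)}$. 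Your displayed bound is nonetheless true, for a reason you do not give. If you keep $\bar m$ instead of replacing it by $\binom n2$, then $\bar m^{o(1)}\le n^{o(1)}=O(n)$, so $\bar m^{1+o(1)}=O(n\bar m)$ is dominated by the $n\bar m\,\phi_n^{-3}\eta^{-5/2}\log^3n$ term.

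There is a further gap, which your write-up shares with the paper's own argument. At each level you absorb a constant factor into the $O(\cdot)$; for instance, your $O(s^{\alpha_{r-1}}B)$ hides the factor $2^{\alpha_{r-1}}$ coming from $n_{ac}\le2s$. The statement is applied with $r=\lceil\ln n\rceil$, so the hidden constant must not depend on $r$. Ignoring rounding, take $t=n/s$ equal blocks with $s^{\alpha_{r-1}-1}=n$. Then $\sum_{a<c}(2s)^{\alpha_{r-1}}+\sum_a s^{\alpha_{r-1}}=n^{\alpha_r}\bigl((t-1)2^{\alpha_{r-1}-1}+1\bigr)/t\ge\frac32n^{\alpha_r}$ whenever all block pairs are nonempty (e.g.\ $G=K_n$, where $t\ge2$). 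So the constant the induction delivers grows by a factor of at least $3/2$ per level, reaching at least $(3/2)^{r}\ge n^{\ln(3/2)}$ at $r=\lceil\ln n\rceil$. A uniform constant needs extra structure, which neither argument provides. For example, one could use that each cross block-pair graph is bipartite and partition it along its two sides, so that $\sum n_{ac}^2$ does not grow from level to level.
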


\begin{proof}
If the small-instance branch is taken, the algorithm returns an exact
representation of the input.  After aggregation it has at most \(n^2\)
edges, which is bounded by the claimed output-size expression when
\(n<\eta^{-1}\), and its processing time is also dominated by the stated
bound.  We may therefore assume that this branch is not taken.

We proceed by induction on \(r\).

For \(r=0\), the claim follows from
\cref{thm:weighted-sparsification} with its dense implementation after aggregating parallel edges so that $m \le n^2 \le n^\omega$. Aggregation may add \(O(\log n)\) dyadic
classes, which is covered by \(b+\log n\).

Suppose now that \(r\ge1\).  The block-pair graphs are edge-disjoint and
have at most \(2s\) vertices.  By induction and
\cref{lem:composition}, the union \(J\) of their sparsifiers satisfies
\[
    (1-\eta)^rL_G
    \preceq L_J
    \preceq(1+\eta)^rL_G.
\]
Let \(\mathcal B\) be the set of nonempty block-pair instances, and write
\(n_{ac}:=|V(G_{ac})|\) and \(m_{ac}:=|E(G_{ac})|\).  There are
\(O(n^2/s^2)\) such instances and \(n_{ac}\le2s\).  Using the ambient
quantities \(\log n\) and \(\phi_n^{-1}\) to upper-bound the corresponding
parameters of every recursive instance, the sum of their
vertex-dependent costs is
\begin{equation}
\label{eq:local-recursive-cost}
    O\!\left(
        n^2s^{\alpha_{r-1}-2}
        (b+r\log n)
        \phi_n^{-5}\eta^{-9/2}\log^5n
    \right).
\end{equation}

We now consider the edge-dependent term.  By induction, the call
on \(G_{ac}\) contributes \(r\,m_{ac}^{1+o(1)}\).  The block-pair graphs
are edge-disjoint, so \(\sum_{(a,c)\in\mathcal B}m_{ac}=m\).  Taking the
subpolynomial factor uniformly over inputs of size at most \(m\) gives
\[
    \sum_{(a,c)\in\mathcal B}
        r\,m_{ac}^{1+o(1)}
    \le
    r\,m^{o(1)}
        \sum_{(a,c)\in\mathcal B}m_{ac}
    =
    r\,m^{1+o(1)}.
\]

Applying \eqref{eq:r-level-size} to the recursive calls shows that
\begin{equation}
\label{eq:intermediate-size}
    |E(J)|
    =
    O\!\left(
        \frac{n^2}{s} (b+r\log(n))
        \phi_n^{-2}\eta^{-2}\log^2(n)
    \right).
\end{equation}
Let \(M:=|E(J)|\) after aggregation.  In addition to
\eqref{eq:intermediate-size}, we have \(M\le m\), because every recursive
output is supported on its corresponding block-pair graph and these
graphs are edge-disjoint.  The cost of the sparse
weighted call on \(J\) is therefore
\begin{equation}
\label{eq:recursive-compression-cost}
    M^{1+o(1)} + O\!\left(
        \frac{n^3}{s}
        (b+r\log n)
        \phi_n^{-5}\eta^{-9/2}\log^5n
    \right).
\end{equation}
We have that 
\(M^{1+o(1)}\le m^{1+o(1)}\).  Together with the
\(r\,m^{1+o(1)}\) contribution of the recursive calls, this gives the
\((r+1)m^{1+o(1)}\) term in \eqref{eq:r-level-time}.  The final sparse
call contributes one additional \((1\pm\eta)\) approximation, proving
the claimed spectral inequality.

Each sparsification and aggregation step enlarges the dyadic weight range
by at most \(O(\log n)\).  Hence the output occupies at most
\(b+O((r+1)\log n)\) dyadic classes, and
\cref{thm:weighted-sparsification} gives \eqref{eq:r-level-size}.

It remains to choose the block size.  The two vertex-dependent terms
\eqref{eq:local-recursive-cost} and
\eqref{eq:recursive-compression-cost} have the same lower order factors, so they are
balanced by
\(s=n^{1/(\alpha_{r-1}-1)}\).

We have
\[
    n^2s^{\alpha_{r-1}-2}
    =
    \Theta\!\left(
        n^{2+(\alpha_{r-1}-2)/(\alpha_{r-1}-1)}
    \right)
    =
    \Theta(n^{\alpha_r})
\]
and
\[
    \frac{n^3}{s}
    =
    \Theta\!\left(
        n^{3-1/(\alpha_{r-1}-1)}
    \right)
    =
    \Theta(n^{\alpha_r}).
\]

This proves
\eqref{eq:r-level-time}.
\end{proof}

\begin{lemma}[Convergence of the exponent]
\label{lem:alpha-closed-form}
For every \(r\ge0\),
\begin{equation}
\label{eq:alpha-closed-form}
    \alpha_r
    =
    2+\frac{1}{r+1/(\omega-2)}.
\end{equation}
In particular, \((\alpha_r)_{r\ge0}\) decreases to \(2\), and for every
\(r\ge\ln n\),
\[
    n^{\alpha_r}\le e\,n^2.
\]
\end{lemma}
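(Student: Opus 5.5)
The plan is to linearize the recurrence by passing to the gap $\delta_r:=\alpha_r-2$. Substituting $\alpha_r=2+\delta_r$ into \eqref{eq:alpha-recurrence} gives
\[
    \delta_{r+1}
    =1-\frac{1}{1+\delta_r}
    =\frac{\delta_r}{1+\delta_r}.
\]
Whenever $\delta_r>0$, taking reciprocals yields $1/\delta_{r+1}=1/\delta_r+1$. Since $\delta_0=\omega-2$, induction on $r$ then gives $1/\delta_r=r+1/(\omega-2)$, which is exactly \eqref{eq:alpha-closed-form}.

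The one point needing care is positivity: the reciprocal step requires $\delta_r>0$ at every stage. I will carry positivity inside the same induction, since $\delta_r>0$ implies $\delta_r/(1+\delta_r)>0$. The base case $\delta_0=\omega-2>0$ uses $\omega>2$. If one wants to allow $\omega=2$, the recurrence gives $\alpha_r=2$ for all $r$. In that case the closed form should be read with $1/(\omega-2)=\infty$, and the remaining claims become trivial. I would state this degenerate case separately, and I regard it as the only real obstacle.

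Monotonicity follows immediately from the closed form: the denominator $r+1/(\omega-2)$ is positive and strictly increasing to infinity, so $\alpha_r$ strictly decreases to $2$.

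For the final bound, fix $r\ge\ln n$. Since $1/(\omega-2)>0$, we have $\alpha_r-2\le 1/r\le 1/\ln n$. For $n\ge2$ this gives
\[
    n^{\alpha_r}=n^2e^{(\alpha_r-2)\ln n}\le n^2e^{\ln n/\ln n}=e\,n^2.
\]
The case $n=1$ holds trivially.
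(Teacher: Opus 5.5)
Your proof is correct and follows the paper's argument essentially step for step. Like the paper, you pass to the gap $\delta_r=\alpha_r-2$, derive $\delta_{r+1}=\delta_r/(1+\delta_r)$, linearize by reciprocals to get $1/\delta_r=r+1/(\omega-2)$, and conclude $n^{\alpha_r}\le n^2 e^{\ln n/r}\le e\,n^2$ for $r\ge\ln n$. Your explicit handling of positivity in the induction, the degenerate case $\omega=2$, and the case $n=1$ covers points the paper leaves implicit: it tacitly assumes $\omega>2$ when it writes $\delta_{r+1}\in(0,\delta_r)$.
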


\begin{proof}
Let \(\delta_r:=\alpha_r-2\), the gap between the current exponent and
the fixed point of $2$.  From \eqref{eq:alpha-recurrence},
\[
    \delta_{r+1}
    =
    1-\frac1{1+\delta_r}
    =
    \frac{\delta_r}{1+\delta_r}.
\]
Thus \(\delta_{r+1}\in(0,\delta_r)\), and taking reciprocals linearizes
the recurrence:
\[
    \frac1{\delta_{r+1}}
    =
    \frac1{\delta_r}+1.
\]
Since \(\delta_0=\omega-2\), iteration gives
\[
    \frac1{\delta_r}
    =
    r+\frac1{\omega-2},
\]
which is equivalent to \eqref{eq:alpha-closed-form} and also shows that
\(\alpha_r\rightarrow 2\) from above.

Finally, if \(r\ge\ln n\), then
\[
    n^{\alpha_r}
    =
    n^2
    \exp\!\left(
        \frac{\ln n}{r+1/(\omega-2)}
    \right)
    \le e\,n^2.
\]
\end{proof}

\Cref{fig:alpha-convergence} illustrates the \(1/r\) decay of the
exponent gap.

\begin{figure}[H]
\centering
\begin{tikzpicture}
\begin{semilogyaxis}[
    width=0.76\textwidth,
    height=0.44\textwidth,
    xlabel={Recursion level \(r\)},
    ylabel={Gap \(\alpha_r-2\)},
    xmin=0,
    xmax=100,
    ymin=0.008,
    ymax=0.45,
    xtick={0,10,20,30,40,50,60,70,80,90,100},
    grid=major,
    grid style={draw=gray!20},
    axis line style={draw=black!70},
    tick label style={font=\small},
    label style={font=\small},
]
\addplot[
    green!48!black,
    thick,
    smooth,
    domain=0:100,
    samples=300,
]
    {1/(x + 1/(2.371339-2))};

\addplot[
    green!48!black,
    only marks,
    mark=*,
    mark size=1.5pt,
    samples at={0,1,...,10,12,14,...,20,25,30,...,100},
]
    {1/(x + 1/(2.371339-2))};
\end{semilogyaxis}
\end{tikzpicture}
\caption{The convergence of
\(\alpha_r=2+1/(r+1/(\widetilde \omega-2))\) to \(2\), where
\(\widetilde \omega=2.371339 > \omega\) \cite{ADWXXZ24}.  The vertical axis shows the gap
\(\alpha_r-2\) on a logarithmic scale.}
\label{fig:alpha-convergence}
\end{figure}

\begin{theorem}[Deterministic sparsification]
\label{thm:deterministic_sparsification}
Let \(G\) be a positively weighted \(n\)-vertex, \(m\)-edge multigraph, where
\[
    b
    :=
    1+\left\lceil
        \log_2\frac{w_{\max}}{w_{\min}}
    \right\rceil.
\]
Let \(0<\eps\le1/2\), and suppose \(m,\eps^{-1}\le n^{O(1)}\).  There is a
deterministic algorithm that constructs a
\((1\pm\eps)\)-spectral sparsifier with
\begin{equation}
\label{eq:final-size-phi}
    O\!\left(
        n (b+\log^2(n))
        \phi_n^{-2}\eps^{-2}\log^4(n)
    \right)
\end{equation}
edges in
\begin{equation}
\label{eq:final-time-phi}
    m^{1+o(1)}
    +
    O\!\left(
        n^2 (b+\log^2(n))
        \phi_n^{-5}\eps^{-9/2}\log^{19/2}(n)
    \right)
\end{equation}
time.  Since
\(\phi_n^{-1}=\log^{9+o(1)}n\), these bounds are
\begin{equation}
\label{eq:final-size-logs}
    O\!\left(
        n (b+\log^2(n))
        \eps^{-2}\log^{22+o(1)}n
    \right)
\end{equation}
edges and
\begin{equation}
\label{eq:final-time-logs}
    m^{1+o(1)}
    +
    O\!\left(
        n^2 (b+\log^2(n))
        \eps^{-9/2}\log^{109/2+o(1)}n
    \right)
\end{equation}
time.
\end{theorem}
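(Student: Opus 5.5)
We would run $\mathcal A_r(G,\eta)$ from \cref{alg:recursive-sparsification} with $r:=\lceil\ln n\rceil$ and $\eta:=\eps/(2(r+1))$, and read everything off \cref{thm:r-level} and \cref{lem:alpha-closed-form}.

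\emph{Spectral error.} \Cref{thm:r-level} gives $(1-\eta)^{r+1}L_G\preceq L_H\preceq(1+\eta)^{r+1}L_G$. With $\eta(r+1)=\eps/2\le1/4$ we have
\[
(1+\eta)^{r+1}\le e^{\eps/2}\le1+\eps
\qquad\text{and}\qquad
(1-\eta)^{r+1}\ge1-(r+1)\eta\ge1-\eps,
\]
so the output is a $(1\pm\eps)$-spectral sparsifier.

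\emph{Hypotheses.} Before invoking \cref{thm:r-level}, we check that $m\le n^{O(1)}$ holds by assumption. We also check $\eta^{-1}=O(\eps^{-1}\log n)\le n^{O(1)}$, which the recursive calls to \cref{thm:weighted-sparsification} need. The number of dyadic classes is at most $b$ by definition.

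\emph{Size.} Substituting $r+1=O(\log n)$ and $\eta^{-1}=O(\eps^{-1}\log n)$ into \eqref{eq:r-level-size} gives
\[
O\!\left(n(b+\log^2 n)\,\phi_n^{-2}\eps^{-2}\log^2 n\cdot\log^2 n\right),
\]
which is \eqref{eq:final-size-phi}.

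\emph{Time.} \Cref{lem:alpha-closed-form} with $r\ge\ln n$ gives $n^{\alpha_r}\le e\,n^2$. The factor $\eta^{-9/2}$ contributes $\eps^{-9/2}\log^{9/2}n$, which combines with the $\log^5 n$ in \eqref{eq:r-level-time} to give $\log^{19/2}n$; the factor $(b+(r+1)\log n)$ becomes $(b+\log^2 n)$. The edge term is $(r+1)m^{1+o(1)}=O(\log n)\,m^{1+o(1)}=m^{1+o(1)}$, using $\log n=O(\log m)$ when $m\ge n/2$ after discarding isolated vertices. Finally, $\phi_n^{-1}=\log^{9+o(1)}n$ converts \eqref{eq:final-size-phi} and \eqref{eq:final-time-phi} into \eqref{eq:final-size-logs} and \eqref{eq:final-time-logs}: the size exponent is $18+4=22$ and the time exponent is $45+19/2=109/2$.

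\emph{Main obstacle.} The subtle step is confirming that \cref{thm:r-level} applies with $\eta$ depending on $n$ and $r$, since the recursive calls use the same $\eta$ on smaller instances. Its proof bounds all parameters by the ambient $n$, so this is consistent. We must also handle the small-instance branch $|V(G)|<\eta^{-1}$ at the top level, where the graph itself is returned. There, at most $n^2<n\eta^{-1}=O(n\eps^{-1}\log n)$ edges are output, which is within the size bound, and the time is $O(m)$. Thus the theorem reduces to parameter substitution.
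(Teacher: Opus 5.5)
Your proposal is correct and is essentially the paper's proof. It makes the same choice $r=\lceil\ln n\rceil$ with $\eta=\Theta(\eps/r)$ (the paper takes $\eta=\eps/(4(r+1))$, and your $\eps/(2(r+1))$ works equally well), substitutes directly into \cref{thm:r-level} and \cref{lem:alpha-closed-form}, and absorbs the factor $r+1$ on the edge term via $n\le 2m$. One caveat on your ``main obstacle'' paragraph: like the paper, you rely on the constant hidden in \eqref{eq:r-level-time} being independent of $r$, since any fixed constant-factor loss per level would compound to $n^{\Theta(1)}$ at $r=\Theta(\log n)$. The ambient-parameter bookkeeping you cite does not establish that uniformity, and both you and the paper take it from the statement of \cref{thm:r-level} rather than verifying it.
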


\begin{proof}
Discard isolated vertices. Set \(r:=\lceil\ln n\rceil\) and
\(\eta:=\eps/(4(r+1))\).  Then
\((1-\eta)^{r+1}\ge1-\eps\), while
\((1+\eta)^{r+1}\le e^{\eps/4}\le1+\eps\).  Moreover,
\cref{lem:alpha-closed-form} gives
\[
    n^{\alpha_r}
    =
    n^2n^{1/(r+1/(\omega-2))}
    =O(n^2).
\]

We have \(r+1=O(\log n)\),
\(\eta^{-2}=O(\eps^{-2}\log^2 n)\), and
\(\eta^{-9/2}=O(\eps^{-9/2}\log^{9/2}n)\).  Substitution into
\cref{thm:r-level} proves \eqref{eq:final-size-phi} and
\eqref{eq:final-time-phi}.  Since \(G\) has no isolated vertices,
\(n\le2m\), so the factor \(r+1\) multiplying the input-sensitive work is
absorbed by \(m^{1+o(1)}\).  Finally, substituting
\(\phi_n^{-1}=\log^{9+o(1)}n\) gives
\eqref{eq:final-size-logs} and \eqref{eq:final-time-logs}.
\end{proof}

\paragraph{Acknowledgments.}
The first author would like to thank Sivakanth Gopi, Janardhan Kulkarni, Yang P. Liu, Jakub Tarnawski, and Sam Wong for early discussions on deterministic graph sparsification. The second author used OpenAI’s GPT 5.6 Sol on Max effort to assist with drafting and revising portions of the exposition. The authors assume responsibility for all content.

\bibliographystyle{alpha}
\bibliography{references}

\appendix

\section{Inverse-Square-Root Approximation}
\label{app:inverse-square-root}

\inversesqrtapproximation*

\begin{proof}
We use the following standard consequence of the Chebyshev construction
for approximating reciprocals; see \cite{SV14}.  For
\(0<\ell\le u\) and \(0<\delta<1\), there is an explicitly computable
polynomial \(q\) of degree
\[
    O\!\left(
        \sqrt{\frac{u}{\ell}}\log\frac1\delta
    \right)
\]
such that \(|1-yq(y)|\le\delta\) for every \(y\in[\ell,u]\).

We also use the integral representation
\[
    x^{-1/2}
    =
    \frac1\pi\int_0^\infty
        \frac{t^{-1/2}}{x+t}\,dt.
\]
Indeed, substituting \(t=xs\) reduces the integral to
\(x^{-1/2}\pi^{-1}\int_0^\infty s^{-1/2}/(1+s)\,ds=x^{-1/2}\).

Set \(T:=16\tau^{-2}\) and \(\zeta:=\tau^2/64\).  For each
\(t\in[0,T]\), apply the reciprocal approximation with
\[
    \ell=a+t,\qquad
    u=2+t,\qquad
    \delta_t=(a+t)\zeta.
\]
Notice that \(\delta_t<1\).  This gives a polynomial \(q_t\) satisfying,
for every \(x\in[a,2]\),
\[
    \left|
        q_t(x+t)-\frac1{x+t}
    \right|
    \le
    \frac{\delta_t}{x+t}
    \le\zeta.
\]
Moreover,
\[
    \frac{2+t}{a+t}\le\frac2a,
    \qquad
    \log\frac1{\delta_t}
    \le
    \log\frac1{a\zeta},
\]
so all the \(q_t\)'s have degree at most
\[
    D
    =
    O\!\left(
        a^{-1/2}\log\frac{1}{a\tau}
    \right).
\]

Define
\[
    p(x)
    :=
    \frac1\pi\int_0^T t^{-1/2}q_t(x+t)\,dt.
\]
After padding the \(q_t\)'s with zero coefficients, the integrand is a
degree-\(D\) polynomial in \(x\); integrating its coefficients therefore
shows that \(p\) is also a degree-\(D\) polynomial.  For
\(x\in[a,2]\), the reciprocal-approximation error is at most
\[
    \frac1\pi\int_0^T t^{-1/2}\zeta\,dt
    =
    \frac{2\zeta\sqrt T}{\pi}
    =
    \frac{\tau}{8\pi}.
\]
The truncation error satisfies
\[
    \frac1\pi\int_T^\infty
        \frac{t^{-1/2}}{x+t}\,dt
    \le
    \frac1\pi\int_T^\infty t^{-3/2}\,dt
    =
    \frac{2}{\pi\sqrt T}
    =
    \frac{\tau}{2\pi}.
\]
Their sum is less than \(\tau\), proving the uniform approximation
guarantee.  The Chebyshev construction is explicit, and its coefficient
integrals can be evaluated to any required precision in time; decreasing the
internal accuracy by a constant factor absorbs this numerical error
without changing the degree bound.
\end{proof}

\end{document}